\documentclass[12pt,draftclsnofoot,onecolumn]{ctextemp_IEEEtran}
\textwidth= 6.5in

\usepackage{amssymb}
\usepackage{amsmath}
\usepackage{amsfonts}
\usepackage{amsthm}
\usepackage{algorithm}
\usepackage{graphicx}
\usepackage{subfigure}
\usepackage{cite}
\usepackage{enumerate}
\usepackage{url}
\usepackage{bm}
\usepackage[noend]{algpseudocode}
\usepackage{algorithmicx,algorithm}
\usepackage{setspace}
\usepackage{cite}
\usepackage{epstopdf}
\usepackage{color}

\def\sgn{\mathop{\rm sgn}}
\newtheorem{myDef}{Definition}
\newtheorem{thm}{Theorem}
\newtheorem{prop}{Proposition}

\makeatother

\setlength{\parskip}{0\normalbaselineskip} 
\setlength{\textfloatsep}{1.2\normalbaselineskip}

\setlength{\intextsep}{2pt plus 2pt minus 2pt}

\begin{document}

\title{\LARGE{Cellular V2X in Unlicensed Spectrum: Harmonious Coexistence with VANET in 5G systems}}
\author{
\IEEEauthorblockN{
{Pengfei Wang},
{Boya Di}, 
{Hongliang Zhang}, 
{Kaigui Bian}, 
{and Lingyang Song}} \\

\vspace{-0.5cm}

\thanks{The authors are with School of Electronics Engineering and Computer Science, Peking University, Beijing, China (email: \{wangpengfei13,diboya,hongliang.zhang,bkg,lingyang.song\}@pku.edu.cn).}
}
\maketitle

\begin{abstract}
With the increasing demand for vehicular data transmission, limited dedicated cellular spectrum becomes a bottleneck to satisfy the requirements of all cellular vehicle-to-everything (V2X) users.
To address this issue, unlicensed spectrum is considered to serve as the complement to support cellular V2X users. In this paper, we study the coexistence problem of cellular V2X users and vehicular ad-hoc network~(VANET) users over the unlicensed spectrum.
To facilitate the coexistence, we design an energy sensing based spectrum sharing scheme, where cellular V2X users are able to access the unlicensed channels fairly while reducing the data transmission collisions between cellular V2X and VANET users.
In order to maximize the number of active cellular V2X users, we formulate the scheduling and resource allocation problem as a two-sided many-to-many matching with peer effects.
We then propose a dynamic vehicle-resource matching algorithm (DV-RMA) and present the analytical results on the convergence time and computational complexity.
Simulation results show that the proposed algorithm outperforms existing approaches in terms of the performance of cellular V2X system when the unlicensed spectrum is utilized.
\end{abstract}

\begin{IEEEkeywords}
Cellular V2X, VANET, Semi-persistent scheduling, Resource allocation, Matching theory
\end{IEEEkeywords}

\newpage

\section{Introduction}%
\par Intelligent transport systems (ITS) have been developed for decades to support a wide variety of safety-critical and traffic-efficient applications. Recently, the solution concept of vehicle-to-everything (V2X) communication has drawn great attention in both industrial and academic fields, including vehicle-to-vehicle (V2V) communication, vehicle-to-infrastructure (V2I) communication, and so on \cite{KAEH-2011}.
 Based on the long term evolution~(LTE) technology, the cellular V2X communication supports massive data transmission in large coverage with controllable latency~\cite{KAEH-2011,SS-2010,FA-2015,SNHH-2015}.
However, due to the high user density of the vehicular network, especially in the dense urban scenario, the dedicated spectrum for cellular V2X communication may not fully satisfy the demands of massive data transmission.
 In addition, requirements of reliable data transmission with low latency is hard to be satisfied, especially in safety-critical applications.

\par To improve the performance of vehicular communication and satisfy the aforementioned demands, several technologies have been discussed in V2X systems.
Multicast-broadcast single-frequency network~(MBSFN) is investigated to increase the reliability of the data transmission, especially for vehicles at the cell edge~\cite{ISTM-2016}. In addition, device-to-device~(D2D) technology can be utilized for the direct V2V communication between multiple vehicles in proximity \cite{SNHH-2015,3GPP-REL14}.
Moreover, non-orthogonal multiple access (NOMA) schemes are considered in V2X systems to support more vehicles over limited spectrum resources as well as improving the spectrum efficiency~\cite{BLY-2017,K-2016}.


\par Different from aforementioned schemes, we aim to support more cellular V2X users by leveraging the unlicensed spectrum. Over the unlicensed spectrum, direct and distributed data transmission between nearby vehicles can be supported, forming the vehicular ad-hoc network~(VANET)\footnote{For example, IEEE 802.11 standards, including IEEE 802.11p and IEEE 802.11n, can be utilized for the data transmission over unlicensed spectrum~\cite{3GPP-REL14,BSCS-2013}.}.
{It worth noting that both cellular V2X and VANET can be used for safety-oriented applications. The differences between cellular V2V users and VANET users over the unlicensed spectrum are the modes of resource allocation and data structure.
1) The resources utilized by cellular V2V users are allocated by the BS in a centralized way~\cite{WSMK-2011,JYMJ-2016}, while VANET users access the channel in a distributed way over the unlicensed spectrum using the energy detection, i.e., each vehicle detects the power level of the channel and waits to access the channel until the detected power is lower than a threshold.
2) The data transmission of cellular V2V users follows the LTE standards, where the length of each data frame is constant and the transmission is terminated with the end of the frame. However, the data transmission of VANET users follows the 802.11 standards, i.e., the package size of each VANET user varies with the transmission demand of the VANET user. Moreover, the VANET user does not release the channel until the package has been transmitted completely, and thus, the transmission time of the package is uncertain.}
Hence, when the cellular V2X communication expands to unlicensed spectrum, it is necessary to consider the coexistence of cellular V2X users and VANET users in order to increase the number of active cellular V2X users as well as the reliability of data transmission.

\par However, new challenges will be brought by the coexistence system of cellular V2X and VANET:
(1) A suitable coexistence mechanism is needed over the unlicensed spectrum for cellular V2X users to offload the transmission demand.
Integrated allocation of dedicated cellular and unlicensed spectrum needs to be mitigated to take full advantage of the global control of cellular system.
Meanwhile, the interference to VANET users caused by cellular V2X users needs to be considered.
(2) Due to the large amount of data transmitted by numerous vehicles, dynamic resource scheduling in each time slot results in inevitable control overhead.
(3) The mobility of vehicles brings about the frequent change of the topological structure of vehicles, which further influences the channel conditions between vehicles. The change of channel conditions is likely to cause the failure of data transmission.

\par To tackle the above challenges, we propose an energy sensing based coexistence scheme over the unlicensed spectrum for both cellular V2X users and VANET users, in which cellular V2X users can share the open spectrum fairly with VANET users according to the sensed channel conditions.
To decrease the overhead, the semi-persistent scheduling (SPS) \cite{SPS-2015} method is considered for the resource allocation.
Moreover, We take the velocity of vehicles into the consideration to measure the change of channel conditions.
Furthermore, we formulate the time-frequency resource allocation problem. We allocate the dedicated cellular and shared subchannels, and schedule the time slots during the scheduling cycle, aiming to maximize the number of active cellular V2X users as well as reducing the interference to VANET users. Since the allocation of subchannels to cellular V2X users in different time slots can be considered as a matching between vehicles and time-frequency resources, we reformulate the problem as a two-sided many-to-many matching problem with peer effects \cite{M-2013,DBSL-2015,ZDSL-2016,ZLS-2016,BLCHW-2011}, which can be addressed by the proposed dynamic vehicle-resource matching algorithm (DV-RMA). 

\par In literature, several existing works also considered the coexistence mechanisms for the heterogeneous system. In \cite{LJMC-2013}, a cluster based heterogeneous systems was proposed for vehicular intersection collision avoidance service, which reduced the traffic amount between vehicles and BS as well as mitigating the overload in high density conditions. 
In \cite{SLCH-2016}, a hybrid overlay protocol was proposed to help vehicles to adaptively select the optimal communication mode, i.e. , ad-hoc communication or cellular communication, according to the transmission demand and network conditions. In~\cite{CYSM-2016},~\cite{3GPP-2013}, the coordination of LTE and Wi-Fi system was investigated for the cellular data offloading on the unlicensed spectrum. However, the resource allocation mechanism of all cellular V2X users over the unlicensed spectrum has not been considered. Moreover, the tradeoff between massive connectivity and user fairness has not been fully discussed.

\par The main contributions of our work in this paper are listed as follows.
\begin{itemize}
\item We propose an energy sensing based spectrum sharing scheme for cellular V2X users to share the unlicensed spectrum fairly with VANET users. The number of active cellular V2X users is maximized while the interference to VANET users is considered.
\item We quantify the interference to VANET users caused by cellular V2X users for utilizing the unlicensed spectrum by constructing the vehicle interference model, in which the interference is approximated as the interference area brought by the cellular V2X user.
\item We investigate the time-frequency resource allocation problem using matching theory. The DV-RMA algorithm in consideration of the peer effects is then designed. Besides, its stability, convergence and complexity are analyzed.
\end{itemize}

\par The rest of the paper is organized as follows. In Section \ref{sec:model}, the system model of spectrum sharing cellular V2X system is described. In Section \ref{sec:scheme}, we design the energy sensing based scheme for cellular V2X users to share the common spectrum with VANET users fairly. We formulate the subchannel allocation and time scheduling problem, and reformulate the problem as a two-sided many-to-many matching problem in Section \ref{sec:formulation}. In Section \ref{sec:matching}, we propose the DV-RMA to solve the problem. The stability, convergence and complexity of DV-RMA as well as the interference to VANET users are analyzed in Section \ref{sec:performance}. Simulation results are presented in Section \ref{sec:simulation_results} and conclusions are drawn in Section \ref{sec:conclusion}.

\section{System Model \label{sec:model}}%
In this section, we consider the scenario where cellular V2X users coexist with VANET users over the unlicensed spectrum. The scenario is described first, and then the interference of cellular V2X users is analyzed.

\begin{figure}[h]
\small
\centering
\includegraphics[width=0.9\textwidth]{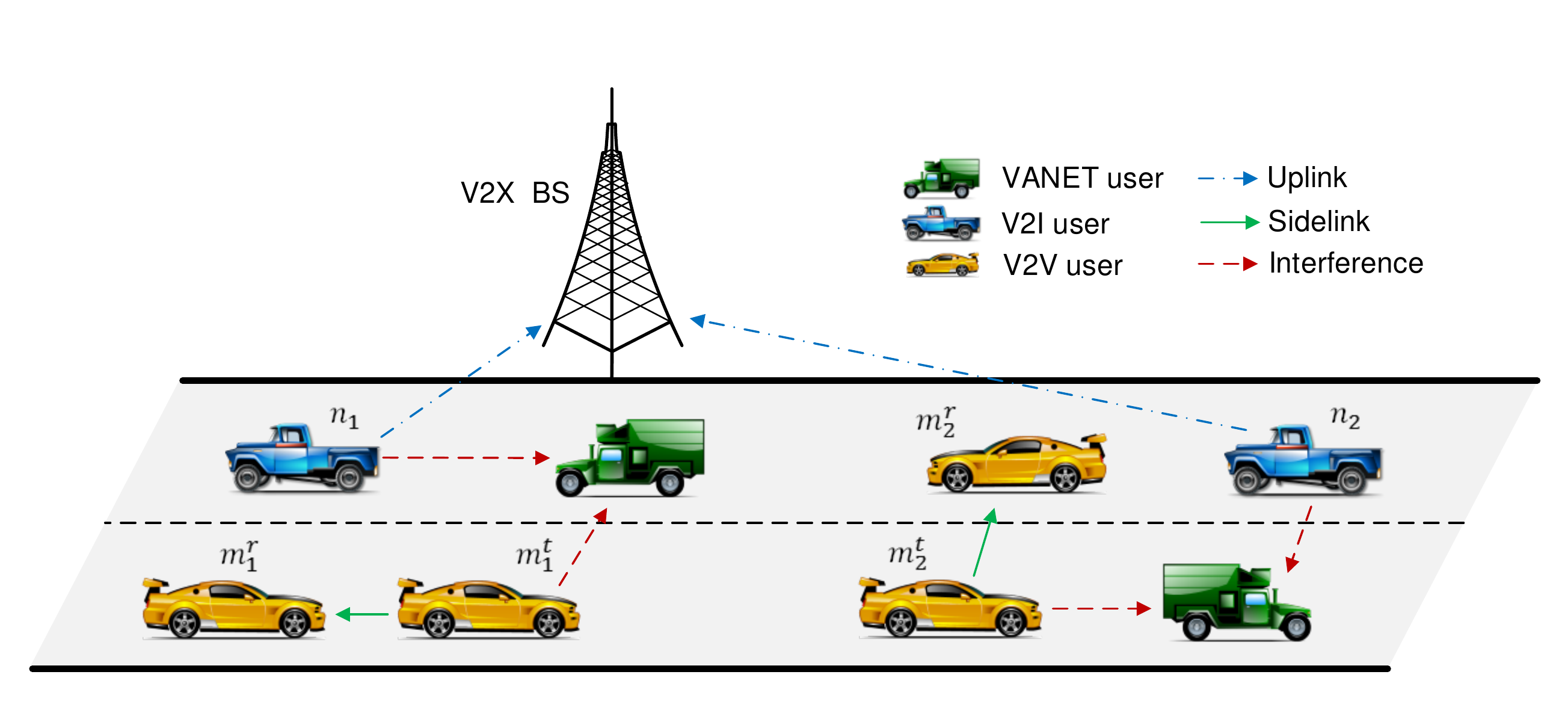}
\caption{System model of the coexistence system of cellular V2X and VANET.}
\label{Fig:system_model}
\end{figure}

\subsection{Scenario Description and Channel Model}
As shown in Fig.~\ref{Fig:system_model}, we consider a vehicular network which consists of $N$ V2I users, $M$ V2V users and VANET users. We denote the V2I transmitters by $n, n \in \mathcal{N} =\{1, \ldots, N\},$ and V2V users by $m, m \in \mathcal{M} =\{N+1, \ldots, N+M\}$, in which $m^t$ represents the transmitter and $m^r$ represents the receiver of V2V user $m$, respectively..

\par 
We assume that cellular V2X users access the dedicated cellular spectrum by orthogonally frequency domain multiplexing (OFDM).
The bandwidth of the dedicated cellular spectrum is divided into $K$ subchannels, denoted by $\mathcal{K}=\{1, 2,\ldots, K\}$.
Since the data transmission of cellular V2X users needs to follow the LTE standard in the unlicensed spectrum, the unlicensed spectrum can be divided into $K_u$ subchannels by the BS, denoted by $\mathcal{K}_u=\{K+1, K+2,\ldots, K+K_u\}$, to support multiple cellular V2X users simultaneously.

The timeline is divided into multiple subframes for cellular data transmission, each of which has the length of $T_s$. A scheduling frame consists of $T$ subframes, denoted by $\mathcal{T} = \{1,\ldots,T\}$. 
Particularly, vehicles are allocated dedicated cellular band in each subframe for control signal transmission to guarantee the reliability.

\par We assume that cellular V2X users transmit data with fixed power, denoted by $P^v$, on a dedicated cellular or unlicensed subchannel. In addition, let the minimum required received power be $P^r$.
The signal received by cellular V2X user $j$ from user $i$ can be expressed as
\begin{equation}
y_j = h_{i,j}x_i + n_j,
\label{e:signal_receiver}
\end{equation}
where $x_i$ is the signal sent by the cellular V2X $i$ and the noise $n_j$ follows independent Gaussian distribution with zero mean and the variance $\sigma^2$.

\par We adopt the free space propagation path-loss model with Rayleigh fading {\cite{Rayleigh}} to model the channel gain between cellular V2X users,
{ i.e.,  $P=P_0\cdot(d/d_0)^{-\alpha}\cdot |h_0|^2$, where $P_0$ and $P$ represent signal power measured at $d_0$ and $d$ away from the transmitter respectively, $\alpha$ is the path-loss exponent, and $h_0\sim \mathcal{CN}(0,1)$ is a complex Gaussian variable representing the Rayleigh fading. Moreover, we simplify the received power $P_0$ at $d_0 = 1$ equals the transmit power $P^v$.}
Since the mobility\footnote{{ The mobility of vehicles also brings about the problem of Doppler-shift. In our scenario, we focus on the urban environment with low velocities of $15\sim60 km/h$ \cite{3GPP-2016}, hence, the maximum Doppler-shift over the unlicensed spectrum ($2.4GHz$) is less than $0.27kHz$. }} of vehicles can severely affect the distance between them, the velocity information is utilized to estimate the distance between cellular V2X users $i$ and $j$. The received signal power of the user $j$ on one subchannel can be given by
\begin{equation}
p_{i,j}^{(r)}=P^v \cdot |h_{i,j}|^2,
\label{e:receiving_power}
\end{equation}
and the channel gain $h_{i,j}$ from the user $i$ to user $j$ can be expressed as
\begin{equation}
|h_{i,j}|^2=G \cdot |\textbf{d}_{i,j}+\textbf{v}_{i,j}\cdot t_w|^{-\alpha}\cdot |h_0|^2,
\label{e:channel_gain}
\end{equation}
 where $G$ is the constant power gain factor introduced by amplifier and antenna, $\textbf{d}_{i,j}$ is the distance vector from cellular V2X user $i$ to user $j$, $\textbf{v}_{i,j}$ is the relative velocity of cellular V2X user $i$ to user $j$, and $t_w$ is the waiting interval between the time point when the data to transmit is ready and the time point when the actual data transmission starts.

\subsection{Interference Analysis \label{sec:interference_analysis}}
We analyze the interference of cellular V2X users to VANET users and other cellular V2X users in this subsection. The VANET users and cellular V2X users occupy the channel in different ways, i.e., the VANET user occupies the whole channel, while cellular V2X users share the channel by dividing it into multiple subchannels.

\subsubsection{Interference to VANET \label{sec:wifi_interference_model}}
When the unlicensed channel is occupied by cellular V2X users, the performance of VANET users is affected. To measure the degradation of the performance of VANET users, we define the interference range of cellular V2X users to VANET users as the area where the received power from the cellular V2X users exceeds the predefined threshold $P^r$.
According to the free path-loss model, the interference range is a circle with the cellular V2X user at the center. The radius of interference range of the cellular V2X user $i$, denoted by $L_i$, is determined by the channel fading and the minimum required receiving power $P^r$, as shown below:
\begin{equation}
L_i = \log_{\alpha}(\frac{P^v G |h_0|^2}{P^r}).
\label{e:interference_radius}
\end{equation}

\par When a new cellular V2X user $i$ accesses the unlicensed spectrum, the total interference range is likely to increase. Only the additional interference range of the user $i$ is considered, which is calculated as the area of the user $i$'s interference circle excluding the overlapping part, i.e., the shaded part in Fig.~\ref{Fig:vehicle_interference}.

\begin{figure}
\small
\centering
\includegraphics[width=0.6\textwidth]{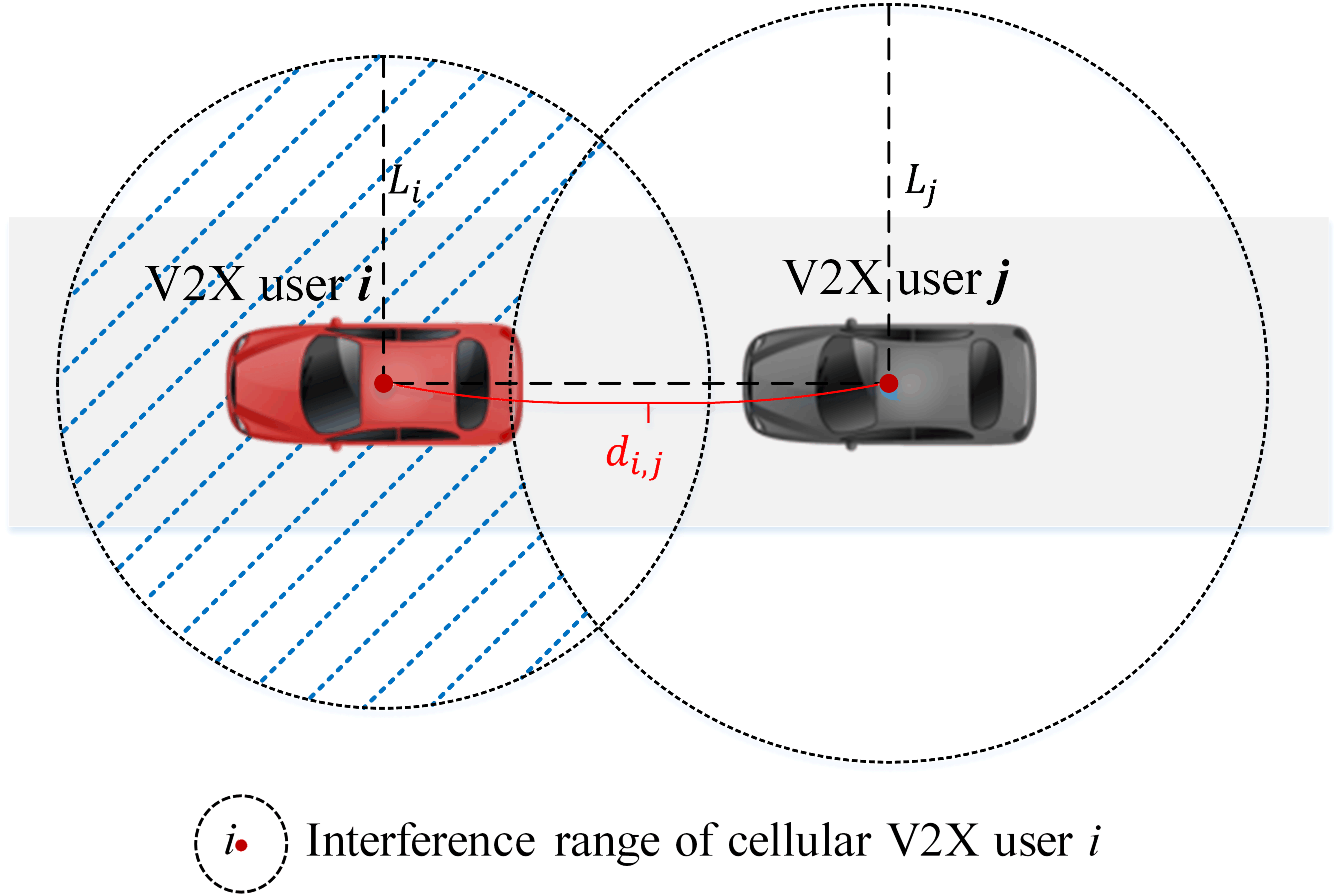}
\caption{Interference of the cellular V2X user to the VANET.}
\label{Fig:vehicle_interference}
\end{figure}

\par Let $d_{i,j}$ be the distance between cellular V2X users $i$ and $j$, and $\mathcal{V}$ be the set of cellular V2X users which occupy unlicensed subchannels. To better depict the size of the additional interference area brought by cellular V2X user $i$ given the existence of cellular V2X user $j$, we express the overlapping part in the following proposition.

\begin{prop}
Without loss of generality, we assume that $L_i<L_j$, then {the area of} the overlapping part between the interference area of cellular V2X users $i$ and $j$ is
\begin{equation}
O_{i,j}=
\left\{
\begin{array}{ll}
L_i^2 \arccos(\frac{ d_{i,j}^2+L_i^2-L_j^2}{2 L_i d_{i,j}}) + L_j^2 \arccos(\frac{ d_{i,j}^2-L_i^2+L_j^2}{2 L_j d_{i,j}}) - \\ \frac{1}{2}\sqrt{2L_j^2(L_i^2+d_{i,j}^2)-(L_i^2-d_{i,j}^2)^2-L_j^4}, &$if$ \quad d_{i,j}<\sqrt{L_j^2-L_i^2}, \\
\pi L_i^2 - L_i^2 \arccos(\frac{ L_j^2-d_{i,j}^2-L_i^2}{2 L_i d_{i,j}}) + L_j^2 \arccos(\frac{ d_{i,j}^2-L_i^2+L_j^2}{2 L_j d_{i,j}}) - \\ \frac{1}{2}\sqrt{2L_j^2(L_i^2+d_{i,j}^2)-(L_i^2-d_{i,j}^2)^2-L_j^4}, &$if$ \quad  d_{i,j}\geq \sqrt{L_j^2-L_i^2}.
\end{array}
\right.
\label{e:overlap}
\end{equation}
\end{prop}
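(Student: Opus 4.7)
The plan is to derive the standard lens (two-circle intersection) area formula and to note that the two displayed expressions are algebraically identical via the identity $\arccos(-u) = \pi - \arccos(u)$, so that the case split amounts to rewriting the same underlying formula in two equivalent forms. First I would place $C_i$ at the origin and $C_j$ at $(d_{i,j},0)$, so that the two interference circles become $x^2 + y^2 = L_i^2$ and $(x-d_{i,j})^2 + y^2 = L_j^2$. Subtracting puts their common chord on the vertical line $x = x_M$ with $x_M = (d_{i,j}^2 + L_i^2 - L_j^2)/(2d_{i,j})$, and the two intersection points are $P,P' = (x_M,\pm y_M)$ where $y_M = \sqrt{L_i^2 - x_M^2}$.

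Next I would decompose the overlap along the chord into the part of disk $i$ with $x\ge x_M$ and the part of disk $j$ with $x\le x_M$. By the law of cosines on triangle $C_iC_jP$, the central half-angles are $\alpha_i = \arccos((d_{i,j}^2+L_i^2-L_j^2)/(2L_id_{i,j}))$ and $\alpha_j = \arccos((d_{i,j}^2+L_j^2-L_i^2)/(2L_jd_{i,j}))$, so the two segment areas are $L_i^2\alpha_i - \tfrac{1}{2}L_i^2\sin 2\alpha_i$ and $L_j^2\alpha_j - \tfrac{1}{2}L_j^2\sin 2\alpha_j$; their arccos parts match the first two terms of the proposition exactly. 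The triangle terms sum to $d_{i,j}y_M$ (twice the area of triangle $C_iC_jP$, equivalently the kite $C_iPC_jP'$), and squaring gives $(d_{i,j}y_M)^2 = d_{i,j}^2(L_i^2 - x_M^2)$; after substituting $x_M$ and expanding, this equals $\tfrac{1}{4}[2L_j^2(L_i^2+d_{i,j}^2) - (L_i^2-d_{i,j}^2)^2 - L_j^4]$, reproducing the subtracted square-root term and settling case 1. Case 2 then follows immediately by applying $L_i^2\arccos(u) = \pi L_i^2 - L_i^2\arccos(-u)$ to the first arccos, which is precisely the rewriting shown in the proposition.

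The main obstacle I expect is the case-1 regime $x_M < 0$, where $\alpha_i > \pi/2$ and the target portion of disk $i$ is in fact the \emph{major} segment: the identity "segment $=$ sector $-$ triangle" must be read with the signed convention $\tfrac{1}{2}L_i^2\sin 2\alpha_i = x_My_M < 0$, or equivalently as "segment $=$ sector $+$ triangle" in the unsigned geometric sense. I would sidestep this ambiguity once and for all by verifying the segment formula through the direct integral $\int_{x_M}^{L_i} 2\sqrt{L_i^2 - x^2}\,dx = L_i^2\arccos(x_M/L_i) - x_My_M$, which holds for every $x_M\in(-L_i,L_i)$ and so confirms the formula uniformly across both displayed cases.
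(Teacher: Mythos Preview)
Your argument is correct. Your route differs from the paper's in a useful way. The paper treats the two regimes as genuinely distinct geometric configurations: when the common chord lies between the centers it cites the standard MathWorld lens formula directly, and when the chord lies beyond the center of the smaller disk it redefines $x_i$ as the (positive) distance from that center to the chord and assembles the overlap as $\pi L_i^2$ minus the minor segment of disk~$i$ plus the segment of disk~$j$. You instead derive a single signed formula, verify it uniformly by the integral $\int_{x_M}^{L_i}2\sqrt{L_i^2-x^2}\,dx$, and then observe that the two displayed expressions in the proposition are \emph{identical} via $\arccos(-u)=\pi-\arccos(u)$, so the case split is purely cosmetic. This is cleaner and also more robust: note that the paper's own proof attaches the standard formula to the regime $d_{i,j}>\sqrt{L_j^2-L_i^2}$ and the $\pi L_i^2$ form to $d_{i,j}\le\sqrt{L_j^2-L_i^2}$, the reverse of the labeling in the proposition statement, and your equivalence observation makes that discrepancy harmless.
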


\begin{proof}
See Appendix \ref{app:proof_prop}.
\end{proof}

{Assuming that the additional interference area brought by cellular V2X user $i$ to user $j$, denoted by $f_{i,j}$, is the difference between the interference area of user $i$ and the overlapping interference area of users $i$ and $j$.  Hence, the additional interference area is shown as below: }
\begin{equation}
f_{i,j}=
\left\{
\begin{array}{ll}
\pi L_i^2 -O_{i,j}, &$if$ \quad d_{i,j}<L_i+L_j, \\
\pi L_i^2, &$if$ \quad  d_{i,j}\geq L_i+L_j.
\end{array}
\right.
\label{e:interference_weight}
\end{equation}
 Given cellular V2X user $j$ whose interference range has been considered, the additional interference area $f_{i,j}$ of cellular V2X user $i$ varies with user $j$, determined by the the interference radius $L_j$ and the distance between cellular V2X users $i$ and $j$. { Considering only the strongest interference,} the additional interference of cellular V2X user $i$, denoted by $f_i$, is defined as shown bellow:
\begin{equation}
f_{i}=\min\limits_{j\in \mathcal{V}}f_{i,j}.
\label{e:min_weight}
\end{equation}

\subsubsection{Interference Analysis of Cellular V2X System}
We first define an indicator $\theta_{i,k}^{(t)}$ to describe subchannel utilization, where
\begin{equation}
\theta_{i,k}^{(t)}=
\left\{
\begin{array}{ll}
1, &$if subchannel $k$ is allocated to cellular V2X user $i$ in subframe $t$, $ \\
0, &$otherwise$.
\end{array}
\right.
\label{e:V2I_indicator}
\end{equation}

 We assume that in a subframe one subchannel can be allocated to at most one V2I user, and multiple V2V users can share the same subchannel. Moreover, since cellular V2X users and VANET users do not utilize the same channel simultaneously, the interference from VANET users to cellular V2X users does not need to be considered.
Therefore, the SINR at the receiver of BS from V2I user $n$ over subchannel $k$ in subframe $t$ can be expressed as
\begin{equation}
\gamma_{n,k}^{(t)}=\frac{\theta_{n,k}^{(t)} P^v|h_{n,0}|^2}{\sigma^2+\displaystyle{\sum_{i=N+1}^{N+M}}{\theta_{i,k}^{(t)} P^v|h_{i,0}|^2}},
\label{e:SINR_licensedV2I}
\end{equation}
where $h_{i,0}$ represents the channel gain from cellular V2X user $i$ to BS.

The SINR at the receiver $m^r$ from V2V transmitter $m^t$ over subchannel $k$ in subframe $t$ is denoted by
\begin{equation}
\gamma_{m,k}^{(t)}=\frac{\theta_{m,k}^{(t)} P^v|h_{m,m}|^2}{\sigma^2+\displaystyle{\sum_{i=1, i\neq m}^{N+M}}{\theta_{i,k}^{(t)} P^v|h_{i,m}|^2}}.
\label{e:SINR_unlicensedV2V}
\end{equation}
The data rate of V2I user $n$ and V2V user $m$ over subchannel $k$ in subframe $t$ is respectively expressed as
\begin{equation}
R_{n,k}^{(t)}=\log_2(1+\gamma_{n,k}^{(t)}),
\label{e:rate_licensed}
\end{equation}
and
\begin{equation}
R_{m,k}^{(t)}=\log_2(1+\gamma_{m,k}^{(t)}).
\end{equation}

\section{Energy Sensing based Spectrum Sharing {Scheme} \label{sec:scheme}}

In this section, we propose the energy sensing based spectrum sharing (ESSS) {scheme} for cellular V2X users to share the unlicensed spectrum fairly with VANET users. 
Different from the users in a centralized downlink system, VANET users need to sense the channel conditions individually for their own data transmission, requiring a specific mechanism to reduce the collision of data transmission. In addition, many safety-critical applications have raised stringent latency requirements on the vehicular network, and thus, the sensing duration needs to be limited to a tolerable level. Moreover, compared with the traditional ad-hoc vehicular network, new challenges have been posed due to the existence of V2I communications.

\par In this scheme, the adaptive duty cycle \cite{QUAL-2014,BAH-2014} is adopted, which represents a constant period of data transmission for cellular V2X users and VANET users. As shown in Fig.~\ref{Fig:scheme_design}, each duty cycle is divided into the sensing period and adaptive transmission period.


\begin{figure}[t]
\small
\centering
\includegraphics[width=0.95\textwidth]{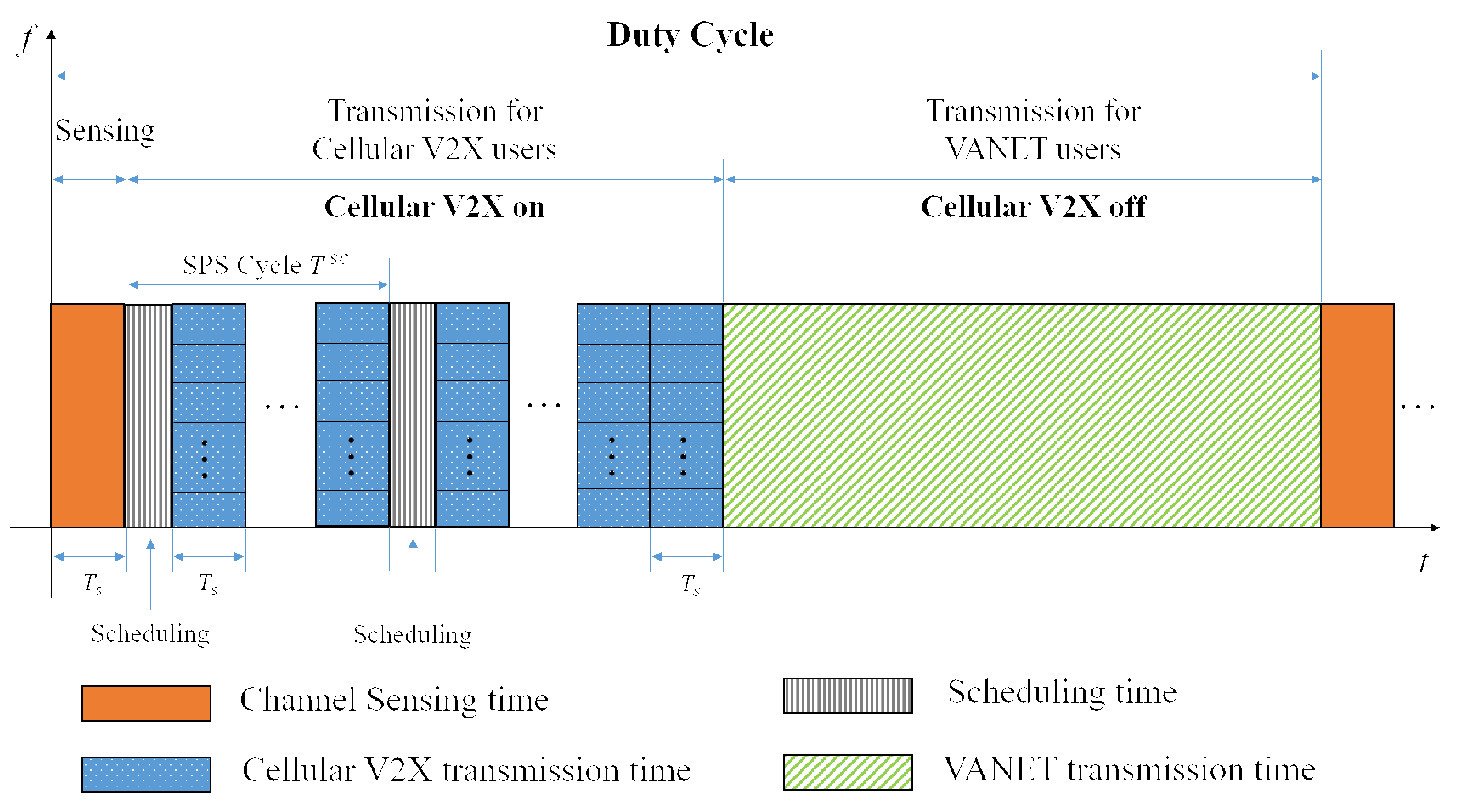}
\caption{Structure of the duty cycle in the ESSS scheme.}
\label{Fig:scheme_design}
\end{figure}

\subsection{Channel Sensing}
In the sensing period, cellular V2X users sense the channels to select one of the unlicensed channels and occupy it. 
The cellular V2X users first perform sensing to select the most suitable unlicensed channel for transmission. They measure the interference level over each unlicensed channel by energy detection. The channel is considered to be idle if the interference level is lower than the sensing threshold. If there is an idle channel, the cellular V2X users will occupy this unlicensed channel, otherwise, they need to share the channel of the least interference level with VANET users.
When the channel is shared, in order to reduce the collision of data transmission between cellular V2X users and VANET users, cellular V2X users will keep sensing until the selected channel is idle and then occupy it immediately.

The channel selection mechanism monitors the operating channel status and keeps tracking the most suitable channel, i.e., once the interference level of the operating channel is beyond the threshold and there exists another channel over which the interference level is detected lower than that of the operating channel, the cellular V2X users will switch to the less interfered channel for data transmission.

\subsection{Sensing Based Adaptive Transmission}
\par The length of cellular V2X transmission is set based on the sensing results. When the unlicensed channel is sensed idle, the cellular V2X transmission period works throughout the duty cycle. Otherwise, the same duration as that of the cellular V2X transmission period is kept for the VANET transmission.

\par In the cellular V2X transmission period, SPS method is utilized to reduce the overhead and resource waste. Dedicated cellular and unlicensed subchannels are allocated to the users periodically and temporarily occupied during the SPS cycle. The cellular V2X transmission period includes several SPS cycles, denoted by $T^{sc}$, and one SPS cycle includes $T$ data transmission subframes, i.e., $T^{sc} = T \cdot T_s$.
At the beginning of each SPS cycle, the transmission demand of each user within this cycle is collected by the BS for scheduling. {The BS takes the demands of all cellular V2X users into the comprehensive consideration and allocates the time and frequency resources accordingly in a centralized way, representing the results that serve as the bound of the performance to be achieved by a distributed online scheme in such scenarios.}
Moreover, the position and velocity information of the vehicles is broadcasted periodically by themselves, and updated during the scheduling interval.
The data transmission of cellular V2X users follows the LTE standards and VANET users, however, cannot access the channel during these subframes.
\par In the VANET transmission period, VANET users can utilize the unlicensed spectrum freely following the IEEE 802.11 standards, while cellular V2X users can only transmit over the dedicated cellular spectrum. The next duty cycle will start once the VANET transmission period terminates.

\section{Problem Formulation \label{sec:formulation}}
In this section, we focus on the dedicated cellular and unlicensed subchannel allocation as well as time scheduling during the SPS cycle, which can be formulated as the subchannel allocation and time scheduling problem. We then reformulate this problem as a two-sided many-to-many matching problem between vehicles and time-frequency resources.
\subsection{Subchannel Allocation and Time Scheduling Problem}
Define a binary transmission matrix $\bm{\beta}$ to indicate whether cellular V2X user $i$ can perform a successful transmission. The indicator element over subchannel $k$ in subframe $t$ can be expressed as
\begin{equation}
\beta_{i,k}^{(t)}=
\left\{
\begin{array}{ll}
1, &$if$ \ \gamma_{i,k}^{(t)} \geq \gamma_{th}, \\
0, &$if$ \ \gamma_{i,k}^{(t)} < \gamma_{th},
\end{array}
\right.
\end{equation}
where $\gamma_{th}$ is the SINR threshold for successful detecting.

According to the interference measurement model, the total interference range to VANET users in subframe $t$ is the sum of interference range of each cellular V2X user that occupies unlicensed subchannels. We then introduce the penalty term to reflect the total interference to VANET users caused by cellular V2X users in subframe $t$:
\begin{equation}
C_i^{(t)}=f_i \cdot \sgn(\sum_{k\in\mathcal{K}_u}\theta_{i,k}^{(t)}),
\label{e:penalty_term}
\end{equation}
in which $f_i$ is the additional interference of cellular V2X user $i$ presented in (\ref{e:interference_weight}).

\par To satisfy the reliability requirements of multiple cellular V2X users, we aim to maximize the number of active cellular V2X users including V2I and V2V users, while constraining the interference to VANET users to a tolerable level during the SPS cycle. Therefore, the scheduling problem is formulated as follows:
\begin{align}
\max_{\theta_{i,k}^{(t)}} & \left( \sum_{t=1}^{T}\sum_{i=1}^{N+M}\sum_{k=1}^{K+K_u}\beta_{i,k}^{(t)}\theta_{i,k}^{(t)} - \lambda \cdot \sum_{t=1}^{T}\sum_{i=1}^{N+M} C_i^t(\theta_{i,k}^{(t)}) \right), \label{e:objective_function}\\
s.t. & \quad \sum_{i=1}^{N}\theta_{i,k}^{(t)} \leq 1, \ \forall k\in\mathcal{K}\cup\mathcal{K}_u, 1\leq t \leq T, \tag{\ref{e:objective_function}a}           \label{e:objective_V2I_licensed_num_limit}\\
      & \quad \sum_{k=1}^{K+K_u}\sum_{t=1}^{T}\theta_{i,k}^{(t)} \leq S, \ \forall i\in\mathcal{N}\cup\mathcal{M}, \tag{\ref{e:objective_function}b}             \label{e:objective_user_subchannels_limit}\\
      & \quad \sum_{i=1}^{N+M}\theta_{i,k}^{(t)}\leq Q, \ \forall k\in\mathcal{K}\cup\mathcal{K}_u, 1\leq t \leq T, \tag{\ref{e:objective_function}c}             \label{e:objective_subchannel_users_limit} \\
       & \quad \sum_{k=1}^{K}\theta_{i,k}^{(t)} \geq 1, \ \forall i\in\mathcal{N}\cup\mathcal{M}, 1\leq t \leq T, \tag{\ref{e:objective_function}d}            \label{e:objective_control_channel_limit}
\end{align}
where $\lambda\geq 0$ is the penalty factor. Constraint (\ref{e:objective_V2I_licensed_num_limit}) shows that one subchannel can be allocated to no more than one V2I user in one subframe;
 (\ref{e:objective_user_subchannels_limit}) shows that one vehicle can occupy no more than $S$ time-frequency resources during one SPS cycle;
(\ref{e:objective_subchannel_users_limit}) reflects that one subchannel can be occupied by no more than $Q$ cellular V2X users in one subframe;
(\ref{e:objective_control_channel_limit}) shows that at least one dedicated cellular subchannel needs to be allocated to each vehicle in each subframe for the reliable transmission of control signals.

\subsection{Matching Problem Formulation}
\par The time resources and frequency resources are orthogonal, and thus, we take the time-frequency resource as a whole, and define the subchannel $k$ in the subframe $t$ as $W_{k,t}$, where $k\in\mathcal{K}\cup\mathcal{K}_u$ and $t\in \mathcal{T}$. We denote the cellular V2X user set by $\mathcal{U} = \mathcal{N}\cup\mathcal{M}$ and time-frequency resource set as $\mathcal{W}=(\mathcal{K}\cup\mathcal{K}_u)\times\mathcal{T}$.
$\mathcal{U}$ and $\mathcal{W}$ are considered as two disjoint sets of selfish and rational players.
The time-frequency resource allocation to cellular V2X users can then be considered as a matching between cellular V2X user set $\mathcal{U}$ and time-frequency resource set $\mathcal{W}$.
Hence, the time-frequency allocation problem in (\ref{e:objective_function}) can be formulated as a two-sided many-to-many matching problem.
For convenience, we denote $(V_i, W_{k,t})$ as a \emph{matching pair} if $W_{k,t}$ is allocated to vehicle user $V_i$. 

\par In order to describe the interest of each player, we assume that each player has a \emph{preference} over the subset of the opposite side. For vehicle $V_i$, its preference is determined by the data rate over the time-frequency resource.
If vehicle $V_i$ prefers $W_j$ to $W_{j'}$, then we have
\begin{align}
\label{e:preference_v}
&W_{k,t} \succ_{V_i} W_{k',t'} \ \Leftrightarrow \ R_{i,k}^{(t)} > R_{i,k'}^{(t')}, \\
&\forall k\in\mathcal{K}\cup\mathcal{K}_u, t\in\mathcal{T}, |k-k'|+|t-t'|\neq 0, \nonumber
\end{align}
which implies that vehicle $V_i$ obtains higher data rate over resource $W_{k,t}$ than resource $W_{k',t'}$.

The preference of time-frequency resource $W_{k,t}$ is determined by the the utility and total interference area.
To be specific, the utility of resource $W_{k,t}$ is the number of the active cellular V2X users over it, which can be expressed as 
\begin{equation}
\label{e:preference_of_W}
U_{W_{k,t}}(\bm{\Theta}) = \sum_{i=1}^{N+M}\beta_{i,k}^{(t)}\theta_{i,k}^{(t)},
\end{equation}
where $\bm{\Theta}=(\theta_{1,1}^{(1)},\dots, \theta_{N+M,K+K_u}^{(T)})$ be the resource allocation indicator.
The total interference area of cellular V2X users during the SPS cycle can be expressed as
\begin{equation}
C(\bm{\Theta}) = \sum_{t=1}^{T}\sum_{i=1}^{N+M}C_i^t.
\end{equation}
Therefore, resource $W_{k,t}$ preferring vehicle $V_i$ to vehicle $V_{i'}$ is given by
\begin{align}
\label{e:preference_w}
& V_i \succ_{W_{k,t}} V_{i'} \\\nonumber
\Leftrightarrow \ & U_{W_{k,t}}(\bm{\Theta})-\lambda C(\bm{\Theta})\ > \ U_{W_{k,t}}(\bm{\Theta}')-\lambda C(\bm{\Theta}').\nonumber
\end{align}
With the above preference relations, we then define the preference list.
\begin{myDef}
A preference list is an ordered set containing all the possible subsets of the opposite set for player $i$ $(i\in\mathcal{U}\cup\mathcal{W})$. In consideration of $q$ subsets of the opposite set of player $i$, denoted by $A_1, A_2,\dots,A_q$, the preference list of player $i$ is defined as $P(i)=\{A_1, A_2,\dots, A_q\}$, which represents that $A_1, A_2,\dots,A_q$ are possible to be the player $i$'s matching pair and $A_1\succ_{i}A_2\succ_{i}\dots \succ_{i}A_{q}$.
\end{myDef}

The preference list can be determined by the BS at the beginning of each SPS cycle based on the channel state information. Moreover, the preference of the player, e.g. $i$, is \emph{transitive}, implying that if $A\succ_{i}A'$ and $A'\succ_{i}A''$, then $A\succ_{i}A''$, where $A, A'$ and $A''$ are subsets of the opposite set of player $i$.

\par With the above definitions of matching pair and preference list, we then formulate the optimization problem shown in (\ref{e:objective_function}) as a \textbf{two-sided many-to-many matching problem}.

\par Given two disjoint sets, i.e., cellular V2X user set $\mathcal{U}=\{V_1,V_2,\dots,V_{N+M}\}$ and time-frequency resource set  $\mathcal{W}=\{W_{1,1},\dots, W_{K+K_u,T}\}$, a many-to-many matching $\Psi$ is a mapping from the set $\mathcal{U}\cup\mathcal{W}$ to the subsets of $\mathcal{U}\cup\mathcal{W}$:
\begin{enumerate}[1)]
\item $\Psi(V_i)\in\mathcal{W}, \forall \ V_i \in\mathcal{U}$,  and $\Psi(W_{k,t})\in\mathcal{U}, \forall \ W_{k,t}\in \mathcal{W}$;
\item $\Psi(V_i) = W_{k,t} \Leftrightarrow \Psi(W_{k,t}) = V_i$;
\item $|\Psi(W_{k,t}) \cap \mathcal{N}|\leq 1$;
\item $|\Psi(V_i)|\leq S$;
\item $|\Psi(W_{k,t})|\leq Q$;
\item If $\exists W_{k,t} \in \Psi(V_i)$ in subframe $t$, there exists $W_{k',t}\in (\mathcal{K}\times\mathcal{T}), W_{k',t} \in \Psi(V_i)$.
\end{enumerate}
Conditions 1) and 2) imply that the cellular V2X user and the time-frequency resource are matched with each other manually. Conditions 3)-5) correspond to constraint (\ref{e:objective_V2I_licensed_num_limit})-(\ref{e:objective_subchannel_users_limit}), respectively. Condition 6) means that a minimum of one dedicated cellular time-frequency resource is allocated to $V_i$ in each occupied subframe according to constraint (\ref{e:objective_control_channel_limit}).

\par Compared to the traditional matching problems, our formulated one is rather complicated due to the following reasons.
First, \emph{externalities} or \emph{peer effects} [\ref{ref:matching_peer_effect}] are considered because the preference of the vehicle is influenced by the other vehicles' co-channel interference over the same time-frequency resource, making the decisions of all vehicles are correlative with the others' behaviors.
Second, the mobility of vehicles makes the topology time-varying, and thus we need to consider the change of vehicles' positions in each subframe during the SPS cycle.
Third, the density of vehicle users is high, and each player can match with any subset of the opposite set, resulting in numerous matching combinations when the number of vehicles is large.
For these reasons, existing algorithms \cite{RS-1992} cannot be directly applied to solve our problem. Thus, we develop a dynamic matching algorithm. 

\section{Dynamic vehicle-resource Matching Algorithm \label{sec:matching}}%

In this section, we design a dynamic vehicle-resource matching algorithm~(DV-RMA). First we illustrate the matching process given the static preference list in our proposed algorithm, then we elaborate on the whole process of the dynamic algorithm in consideration of the peer effects.

\subsection{Algorithm Design}\label{sec:Algorithm Design}

The DV-RMA is constructed based on the iteratively renewed preference lists, which are updated at the beginning of each static matching process.
The static matching process consists of several rounds of vehicles' proposal and time-frequency resources' acceptation/rejection. Three steps of each round are described in detail as follows.

\subsubsection{Vehicles propose to the time-frequency resources}
 Each \emph{incompletely matched}\footnote{A cellular V2X user is incompletely matched if it is matched with fewer than $S$ time-frequency resources.} vehicle proposes itself to its most preferred time-frequency resource according to its preference list. Vehicles then remove the time-frequency resource they have proposed to from their preference lists in the current matching process, and wait for the responses of the time-frequency resources. The peer effects brought by other vehicles are unpredictable for each vehicle. Therefore, the proposal in each round of matching process is based on the original preference list of vehicles constructed at the beginning of the current matching process.

\subsubsection{The time-frequency resources accept or reject the proposal}
  When vehicles propose to the resource, the time-frequency resource needs to evaluate the utility it obtains from accepting the proposal against the loss brought by the increase of the co-channel interference. We introduce the concept of \emph{blocking pair} to describe the mechanism of the time-frequency resource deciding whether to accept the vehicle's proposal.

\begin{myDef}
    Provided a matching $\Psi$ and a pair $(V_i, W_{k,t})$ with $V_i \notin \Psi(W_{k,t})$ and $W_{k,t}\notin\Psi(V_i)$. $(V_i, W_{k,t})$ is a \textbf{blocking pair} if
    $(1) W_{k,t}\in P(V_i);$
     $(2)\{V_i\}\cup\Psi(W_{k,t})\succ_{W_{k,t}}\Psi(W_{k,t});$
     $(3)\{W_{k,t}\}\cup\Psi(V_i) \succ_{V_i}\Psi(V_i).$
\end{myDef}

\par The blocking pair $(V_i, W_{k,t})$ is considered only when $V_i$ and $W_{k,t}$ have not matched with each other and $W_{k,t}$ is in the preference list of $V_i$. Moreover, the match of $V_i$ and $W_{k,t}$ can increase the interest of both $V_i$ and $W_{k,t}$, representing they both prefer to match with each other and the current matching structure is not optimal.

\par With the definition of blocking pair, we then describe how the time-frequency resource $W_{k,t}$ decides whether to accept the proposal of $V_i$. The proposals of V2I users and V2V users are discussed sequentially.
Firstly we discuss the \emph{V2I users' proposals\label{par:V2I_proposal}}. Since one time-frequency resource can be matched with no more than one V2I user, $W_{k,t}$ need to check whether the proposal is from the V2I user.
If $W_{k,t}$ has kept the matching proposal of a V2I user $V_{*}$, $W_{k,t}$ can match with only one of $V_{*}$ and $V_i$ to maximize its own utility as well as reducing the total interference area to VANET users.
Otherwise, $W_{k,t}$ makes different decisions in the following three cases:
\begin{itemize}
\item Accept when $W_{k,t}$ is unsaturated\footnote{A time-frequency resource is unsaturated if it is matched with fewer than $Q$ vehicles.} and $(V_i,W_{k,t})$ forms the blocking pair.
\item Reject when $W_{k,t}$ is unsaturated and $(V_i,W_{k,t})$ is not the blocking pair.
\item Reject the \emph{least contributed\footnote{Excluding the interference range, the utility that the least contributed vehicle brings is least.}} vehicle $V_m\in \Psi(W_{k,t})\cup V_i$ when $W_{k,t}$ is saturated\footnote{A time-frequency resource is saturated if it is matched with $Q$ vehicles.}.
\end{itemize}
Secondly, we discuss the \emph{V2V users' proposals}. The process is the same as that of V2I users' proposal when the resource doesn't match with any V2I user, three cases of which has been mentioned before.
\subsubsection{Termination conditions of the matching process}
The matching process terminates when no vehicle users propose themselves to time-frequency resources anymore, which means that either the vehicle has been matched with $S$ time-frequency resources or the preference list of the vehicle is empty.

\subsubsection{Dynamic algorithm}
\par The dynamic algorithm consists of multiple static matching processes. We first discuss the update of preference list between two matching processes, and then introduce the \emph{incompatible list} to avoid the repeated proposals which has been rejected in the same matching structure before.

\par As the result of \emph{peer effect}, the preference lists of vehicles are correlated with the matching structure of vehicles and time-frequency resources. Hence, the static preference list is no longer suitable to our matching problem. We dynamically adjust the preference lists of vehicles according to the matching result of each static \emph{matching process}.
The preference of the vehicle $V_i$ over each time-frequency resource $W_{k,t}\in\mathcal{W}$ is determined by the data rate $R_{i,k}^{(t)}$ based on the current matching pairs over $W_{k,t}$. To increase the possibility of the transmitted data being successfully received, the vehicle only proposes to access the time-frequency resources over which the SINR is larger than the SINR threshold $\gamma_{th}$. Hence, the time-frequency resources over which $\gamma_{i,k}^{(t)}<\gamma_{th}$ are not in the preference list of the vehicle.
The \emph{matching process} will be repeated until all the vehicles are either completely matched or have an empty preference list in the first round of matching process.

\par In order to reduce the complexity of our algorithm, repeated attempts of formerly rejected matching proposals need to be avoided. For vehicle $V_i$, some time-frequency resources do not accept $V_i$'s matching proposal based on current matching condition, since they have rejected vehicle $V_i$ in the previous iterations of matching under the same matching structure. To exclude these time-frequency resources from the preference list of vehicles, we introduce the concept of \emph{incompatible pair} defined as follows.

\begin{myDef}
For $V_i\in\mathcal{U}, W_{k,t}\in\mathcal{W},$ if $\Psi(W_{k,t}) \succ_{W_{k,t}}\{V_i\}\cup\Psi(W_{k,t})$, $V_i \notin \Psi(W_{k,t})$ and $W_{k,t}\notin\Psi(V_i)$, $\Psi(W_{k,t})$ is in \textbf{incompatible list} for $V_i$ over $W_{k,t}$, denoted by $\Psi(W_{k,t})\in F_{V_i}(W_{k,t})$.
\end{myDef}

The time-frequency resource makes the same decision when facing the same matching proposal under the same matching structure. Hence, the vehicle users will not propose to the time-frequency resource with the matching structure recorded in the \emph{incompatible list} to avoid failed proposals. The matching pairs of $W_{k,t}$ are added into the incompatible list of vehicle $V_i$ once $W_{k,t}$ rejects the proposal of vehicle $V_i$.

\begin{algorithm}
\small
\label{DV-RMA_algorithm}
\caption{Dynamic vehicle-resource matching algorithm} 
\hspace*{0.02in} {\bf Input:} 
Set of vehicles $\mathcal{U}$ and time-frequency resources $\mathcal{W}$.\\
\hspace*{0.02in} {\bf Output:} 
A two-sided many-to-many matching $\Psi$.\\
\\
\hspace*{0.02in} {\bf Phase1. Initialization Phase:}
\begin{algorithmic}
\State Build a matching $\Psi$ randomly satisfying all the constraints; 
\State Initialize the incompatible list $F_{V_i}(W_{k,t})$ as $\emptyset, \forall V_i\in\mathcal{U},W_{k,t}\in\mathcal{W}$; 
\State Set a binary variable $\mu$ representing whether any vehicle proposes to resources;
\State Set a variable $\kappa$ representing the iteration times of one matching process;
\end{algorithmic}

\hspace*{0.02in} {\bf Phase2. Matching Phase:}
\begin{algorithmic}
\Loop
    \State  Construct preference list $P(V_i)$ based on (\ref{e:rate_licensed}), (\ref{e:preference_v});
    \State  Delete $W_{k,t}$ from $P(V_i)$ if $\beta_{i,k}^{(t)}=0$;
    \State  Delete $W_{k,t}$ from $P(V_i)$ if $\Psi(W_{k,t})\in F_{V_i}(W_{k,t})$;
    \If{ $P(V_i)=\emptyset, \forall V_i\in\mathcal{U}$}
        \State $\mu=0$;
    \Else
        \State $\mu=1$;
    \EndIf
    \State $\kappa = 0$;
    \Loop
        \State $\kappa=\kappa+1$;
        \State $V_i$ proposes to $P(V_i)[1], \forall V_i\in \mathcal{U}$;
        \If{$\mu==0$} 
            \State Terminate the static matching process.
            \State \textbf{break};
        \EndIf
        \ForAll{$W_{k,t}$ that vehicles propose to}
            \If{$V_i$ is a V2I user}
                \If{$W_{k,t}$ has matched with V2I user}
                    \State $W_{k,t}$ chooses the V2I user that it prefers based on (\ref{e:preference_w}).
                \Else
                    \If{ $|\Psi(W_{k,t})|<Q$ \textbf{and}  $(V_i,W_{k,t})$ is a blocking pair}
                        \State $W_{k,t}$ accepts $V_i$;
                    \Else
                        \State $W_{k,t}$ rejects $V_i$;
                    \EndIf
                \EndIf
            \Else{ $ V_i$ is a V2V user}
                \If{ $|\Psi(W_{k,t})|<Q$ \textbf{and} $(V_i,W_{k,t})$ is a blocking pair}
                    \State $W_{k,t}$ accepts $V_i$;
                \Else
                    \State $W_{k,t}$ rejects $V_i$;
                \EndIf
            \EndIf
            \State Add $\Psi(W_{k,t})$ into the rejected vehicle's incompatible list;
        \EndFor
    \EndLoop
    \If{$\kappa ==1$}
        \State Terminate the Matching Phase.
        \State \textbf{break};
    \EndIf
\EndLoop
\end{algorithmic}

\hspace*{0.02in} {\bf Matching finished}
\end{algorithm}

\subsection{Algorithm Description}
As shown in \textbf{Algorithm 1}, the DV-RMA consists of the initialization phase and matching phase.
In the \emph{initialization phase}, we consider a random matching between vehicle users and time-frequency resources, with all the constraints satisfied. We initialize the incompatible lists of all vehicle users as $\emptyset$.
In the \emph{matching phase}, the process of each matching round is consist of updating the preference list and looking for stable matching based on current preference list.

\subsubsection{Update the preference list}
The preference lists of vehicles are updated according to the matching structure. The time-frequency resource is deleted if $\gamma_{i,k}^{(t)}<\gamma_{th}$ or the current matching structure $\Psi(W_{k,t})$ has been recorded in the incompatible list.

\subsubsection{Matching process}
 We obtain the matching result based on current preference list, as described in Section~\ref{sec:Algorithm Design}. Moreover, the vehicle $V_i$ rejected by the time-frequency resource $W_{k,t}$ will add the matching pairs of $W_{k,t}$, denoted by $\Psi(W_{k,t})$, into its incompatible list, denoted by $F_{V_i}(W_{k,t})$.

\subsubsection{Termination conditions of the matching phase}
 The matching phase terminates when no vehicle users propose in the first round of \emph{matching process}. Finally, a two-sided many-to-many matching is returned as the result of our algorithm.

\section{Performance Analysis} \label{sec:performance}
In this section, we first analyze the stability, convergence and complexity of the proposed algorithm DV-RMA. We then discuss the influence of interference caused by cellular V2X users on the VANET users.

\subsection{Stability, Convergence and Complexity of the Algorithm} \label{sec:complexity}
\subsubsection{Stability and Convergence}
We first introduce the concept of \emph{pairwise stable} matching, and then prove that the DV-RMA converges to a \emph{pairwise stable} matching.
\begin{myDef}
A matching $\Psi$ is defined as \textbf{pairwise stable} if it is not blocked by any pair which does not exist in $\Psi$.
\end{myDef}
Based on the definition of pairwise stable, we analyze the stability and convergence of the DV-RMA.
\begin{thm}
\label{thm:pairwise stable}
If the DV-RMA converges to a matching $\Psi$, then $\Psi$ is a \textbf{pairwise stable} matching.
\end{thm}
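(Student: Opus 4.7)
The plan is to argue by contradiction: suppose DV-RMA terminates at some matching $\Psi$, yet there exists a blocking pair $(V_i,W_{k,t})$ with $V_i\notin\Psi(W_{k,t})$ and $W_{k,t}\notin\Psi(V_i)$, and then derive that the algorithm could not have stopped at $\Psi$.

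First I would unpack the termination rule. DV-RMA exits when $\kappa=1$, i.e. when in the final matching process no vehicle issued a proposal, which means that for every pair $(V_i,W_{k,t})$ the resource $W_{k,t}$ had been pruned from the updated preference list $P(V_i)$ for one of only two reasons: either $\beta_{i,k}^{(t)}=0$, or $\Psi(W_{k,t})\in F_{V_i}(W_{k,t})$. The next step is to rule out both pruning reasons for the putative blocking pair. Blocking-pair conditions (1) and (3), together with the rate-based preference defined in (\ref{e:preference_v}), force $\gamma_{i,k}^{(t)}\ge\gamma_{th}$, hence $\beta_{i,k}^{(t)}=1$. The incompatible-list condition is literally $\Psi(W_{k,t})\succ_{W_{k,t}}\{V_i\}\cup\Psi(W_{k,t})$, whereas blocking-pair condition (2) asserts the reverse inequality, so $\Psi(W_{k,t})\notin F_{V_i}(W_{k,t})$. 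Therefore $W_{k,t}$ would survive in $P(V_i)$ after the last update, $V_i$ would have proposed, and convergence at $\Psi$ is contradicted.

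To close the loop I also have to argue that such a proposal would actually alter $\Psi$ rather than merely being issued and immediately refused. In the unsaturated branch $|\Psi(W_{k,t})|<Q$ this is immediate, since blocking-pair condition (2) is exactly the acceptance criterion used by the algorithm. In the saturated branch $|\Psi(W_{k,t})|=Q$, the algorithm runs the least-contributed rule over $\{V_i\}\cup\Psi(W_{k,t})$; I would show that $V_i$ survives this comparison by arguing that condition (2), phrased through the combined score $U_{W_{k,t}}(\bm{\Theta})-\lambda C(\bm{\Theta})$, forces at least one incumbent to score below $V_i$, so $V_i$ displaces that member and $\Psi$ changes.

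I expect the saturated case to be the main obstacle, because the penalty term $\lambda C(\bm{\Theta})$ couples interference footprints $f_i$ across resources, so a careful per-resource accounting is needed to ensure the local ``least contributed'' comparison genuinely reflects the global blocking-pair inequality rather than producing a tie or a locally inferior swap. A brief tie-breaking convention (inherited from the strict preference in (\ref{e:preference_w})) will suffice to rule out the degenerate case, after which the contradiction is complete and $\Psi$ is necessarily pairwise stable.
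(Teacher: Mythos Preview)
Your core argument (the first two paragraphs) is correct and is essentially the paper's own proof: assume a blocking pair exists, observe that termination means every $W_{k,t}$ has been pruned from every $P(V_i)$, rule out the $\beta_{i,k}^{(t)}=0$ pruning via the SINR implied by the blocking conditions, rule out the incompatible-list pruning because its defining inequality is the reverse of blocking condition (2), and conclude $V_i$ would still propose --- contradiction. The paper's version is even terser: it skips the $\beta$ case entirely and jumps straight to the incompatible-list contradiction.

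Your third and fourth paragraphs are unnecessary, and they stem from a slight misreading of the termination rule. DV-RMA exits the matching phase when \emph{no vehicle issues a proposal} in the first round of a new matching process (that is what $\kappa=1$ encodes after the preference-list update); it is not a ``matching stops changing'' criterion. Hence once you have shown that $W_{k,t}$ survives in $P(V_i)$ after the final update, $V_i$ would propose and the algorithm could not have terminated --- the contradiction is already complete. You do not need to argue that the proposal would be accepted, so the saturated-case analysis and the worry about the global coupling of $\lambda C(\bm{\Theta})$ can be dropped.
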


\begin{proof}
See Appendix \ref{app:proof_pairwise_stable}.
\end{proof}

\par A pairwise stable matching implies that both vehicles and time-frequency resources obtain the highest utility under the current conditions. It is important to ensure that the pairwise stable matching results can be obtained in the finite iteration times, which can be proved as follows.

\begin{thm}
\label{thm:convergence}
The DV-RMA converges to a pairwise stable matching after a limited number of iterations.
\end{thm}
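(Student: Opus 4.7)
The plan is to bound the total iteration count by constructing a bounded, strictly increasing lexicographic potential that combines the growth of the incompatible lists with the objective-utility driving the acceptance rule. First I would handle the inner loop: by the proposal rule in Section~\ref{sec:Algorithm Design}, each incompletely matched vehicle $V_i$ proposes to the head of its current preference list and then deletes that entry, so $|P(V_i)|$ shrinks by one every round, and therefore any single static matching process terminates in at most $|\mathcal{W}|$ rounds regardless of how the resources respond.

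To bound the outer loop I would introduce two quantities: the aggregate incompatible-list size $\Phi = \sum_{V_i\in\mathcal{U}}\sum_{W_{k,t}\in\mathcal{W}}|F_{V_i}(W_{k,t})|$ and the aggregate matching utility $U(\bm{\Theta}) = \sum_{W_{k,t}\in\mathcal{W}} U_{W_{k,t}}(\bm{\Theta}) - \lambda\, C(\bm{\Theta})$ derived from (\ref{e:preference_of_W})--(\ref{e:preference_w}). Entries are only ever appended to incompatible lists, so $\Phi$ is monotone non-decreasing, and it is bounded above by $|\mathcal{U}|\cdot|\mathcal{W}|\cdot 2^{|\mathcal{U}|}$ since each $F_{V_i}(W_{k,t})$ is a subset of $2^{\mathcal{U}}$. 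The utility $U$ is a finite sum of bounded indicator and interference-area terms, hence also bounded.

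The core step is the claim that every proposal round of the inner loop strictly increases the lexicographic pair $(\Phi,U)$. A rejection of $V_i$ by $W_{k,t}$ appends $\Psi(W_{k,t})$ to $F_{V_i}(W_{k,t})$; the filtering step at the top of the outer iteration removes any $W_{k,t}$ whose current matching structure already lies in the incompatible list, so the appended entry is new and $\Phi$ strictly increases. An acceptance in the unsaturated branch happens only when $(V_i,W_{k,t})$ is a blocking pair, which by the preference definition (\ref{e:preference_w}) forces a strict increase in $U_{W_{k,t}}-\lambda C$, and hence in $U$; an acceptance in the saturated branch swaps out the least-contributive incumbent, which itself counts as a rejection event and so again strictly grows $\Phi$. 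Because a non-terminating outer iteration must contain at least one actual proposal (otherwise $\mu=0$ and $\kappa=1$, triggering the matching-phase termination), strict lexicographic progress is made at every outer iteration.

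Combining these facts, the outer loop repeats only finitely many times before $(\Phi,U)$ hits its upper bound, and each repetition runs at most $|\mathcal{W}|$ inner rounds, yielding a finite total iteration count; pairwise stability of the returned matching then follows from Theorem~\ref{thm:pairwise stable}. The main obstacle I anticipate is verifying strictness of the utility increase for acceptances in the presence of peer-effect-induced SINR changes at other vehicles sharing the same $(k,t)$ resource. Since the preference in (\ref{e:preference_w}) is defined directly in terms of the global change in $U_{W_{k,t}}-\lambda C$ rather than the newcomer's marginal contribution, those secondary SINR perturbations at co-matched vehicles are already absorbed into the strict inequality that gates acceptance, which is precisely what rescues the lexicographic argument.
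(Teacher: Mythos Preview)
Your argument and the paper's rest on the same engine---the incompatible lists can contain only finitely many entries and grow monotonically---but the paper reaches the conclusion with a much lighter counting argument and no lexicographic potential. In Appendix~\ref{app:proof_convergence} the authors simply observe that every proposal ends in one of two ways: either the resource accepts without ejecting anyone (which can happen at most $(N+M)S$ times in total, since that is the cap on matched pairs), or some vehicle is rejected and a fresh entry is appended to an incompatible list. They then bound the number of possible incompatible-list entries per vehicle--resource pair by $\sum_{j=1}^{S}\binom{N+M-1}{j}$, using the capacity constraint $|\Psi(W_{k,t})|\le Q$ to restrict the size of recordable configurations; multiplying by $T(K+K_u)$ gives the global bound, which is later reused for Theorem~\ref{thm:complexity}. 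Your bound $|\mathcal{U}|\cdot|\mathcal{W}|\cdot 2^{|\mathcal{U}|}$ is correct but exponentially looser because you do not exploit this capacity constraint, and your secondary coordinate $U$ is doing the job that the paper's simple ``$(N+M)S$ pure acceptances'' count already handles.

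There is also a genuine gap in your strict-increase claim for $\Phi$. You argue that a rejection must add a \emph{new} configuration because the filtering step at the top of the outer loop would have pruned $W_{k,t}$ had $\Psi(W_{k,t})$ already been in $F_{V_i}(W_{k,t})$. But that filter is applied once, against the matching state at the \emph{start} of the outer iteration; by the time $V_i$ actually reaches $W_{k,t}$ several inner rounds later, $\Psi(W_{k,t})$ may have been altered by intervening proposals and could coincide with a configuration already recorded in $F_{V_i}(W_{k,t})$. In that case the append is a no-op and $\Phi$ does not strictly increase, so the round-level lexicographic progress you need is not guaranteed by the mechanism you cite. The paper's proof sidesteps this by bounding the number of \emph{distinct} recordable configurations rather than asserting per-event strictness, which is why the direct counting is both simpler and slightly more robust here.
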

\begin{proof}
See Appendix \ref{app:proof_convergence}.
\end{proof}

\subsubsection{Complexity}
We present the complexity of the DV-RMA and the greedy algorithm, described below in this part, in this part. We first analyze the complexity of each round of the DV-RMA, however, the strict upper bound for the number of dynamic rounds of DV-RMA is hard to obtain because of the following reasons. The number of dynamic rounds in DV-RMA is determined by the relationship of the matching structure and adjusted preference list, and the two factors of each vehicle is influenced by that of others. Hence, the strict complexity of DV-RMA is impeded by the peer effects. Here we give the upper bound of the proposal number in DV-RMA.

\begin{thm}
\label{thm:complexity}
The complexity of each round of the DV-RMA is $O((N+M)(K+K_u)^2T^2)$, and the upper bound of the proposal number in DV-RMA is $O(T(K+K_u)(N+M)^{S+1})$.
\end{thm}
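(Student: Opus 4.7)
The plan is to establish the two bounds separately, relying only on the structure of Algorithm~1 and the monotone pruning mechanism built around the incompatible list $F_{V_i}(W_{k,t})$. For the per-round cost, I would first account for the preference-list work done at the start of each outer iteration: evaluating $R_{i,k}^{(t)}$ via (\ref{e:rate_licensed}) across the $(K+K_u)T$ resources and ordering them to obtain $P(V_i)$ via (\ref{e:preference_v}), together with the $\beta_{i,k}^{(t)}$ filter, contributes $O((K+K_u)T\log((K+K_u)T))$ per vehicle. The dominant term comes from the next pruning step ``delete $W_{k,t}$ from $P(V_i)$ if $\Psi(W_{k,t})\in F_{V_i}(W_{k,t})$'': there are $(K+K_u)T$ candidate resources, and each incompatible list can accumulate $O((K+K_u)T)$ previously blocked configurations, so the scan costs $O((K+K_u)^2T^2)$ per vehicle. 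The inner propose/reject cycle is cheaper: each sub-iteration dispatches at most $N+M$ proposals with $O(1)$ work each, using the blocking-pair test and the comparison in (\ref{e:preference_w}). Summing the dominant cost over the $N+M$ vehicles yields the announced $O((N+M)(K+K_u)^2T^2)$ per round.

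For the total-proposal bound, the key structural observation is that once $W_{k,t}$ rejects $V_i$, the prevailing matching $\Psi(W_{k,t})$ is recorded in $F_{V_i}(W_{k,t})$, and by the deterministic accept/reject rule $V_i$ can never propose to $W_{k,t}$ again while $\Psi(W_{k,t})$ remains in that exact configuration. Hence, for each pair $(V_i,W_{k,t})$, the number of proposals across the entire run is bounded by the number of distinct configurations $\Psi(W_{k,t})$ that ever arise at $W_{k,t}$. I would bound this count crudely by enumerating admissible partial matchings consistent with capacity conditions~3)--6); accounting for the per-vehicle cap $S$ and the per-resource cap $Q\le N+M$ yields $O((N+M)^S)$ reachable configurations per $(V_i,W_{k,t})$ pair. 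Multiplying by the $(K+K_u)T$ resources and the $N+M$ potential proposers then produces the claimed $O(T(K+K_u)(N+M)^{S+1})$ upper bound.

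The main obstacle lies in the second bound: because of the peer effects emphasized in Section~\ref{sec:formulation}, the sequence of matchings actually visited by $W_{k,t}$ is not arbitrary but is coupled, through the co-channel interference terms in (\ref{e:SINR_licensedV2I})--(\ref{e:SINR_unlicensedV2V}), to the matchings at every other resource, so I cannot obtain a tight count by freely enumerating subsets of $\mathcal{U}$. The argument has to lean instead on the monotone growth of $F_{V_i}(W_{k,t})$ together with the capacity caps to ensure that each $V_i$'s live preference list strictly shrinks after every failed proposal, which suffices to give a polynomial but not an asymptotically tight estimate. This is precisely the difficulty the authors flag in the paragraph preceding the theorem, and the bound above should be viewed as the best worst-case guarantee obtainable without tracking the global matching dynamics in detail.
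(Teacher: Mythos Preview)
Your argument for the total-proposal bound is essentially the paper's: for each pair $(V_i,W_{k,t})$ count the distinct configurations $\Psi(W_{k,t})$ that can ever be recorded in $F_{V_i}(W_{k,t})$, bound that by $\sum_{j\le S}\binom{N+M-1}{j}=O((N+M)^S)$, and then multiply by the $(K+K_u)T$ resources and the $N+M$ proposers. The paper also throws in the $(N+M)S$ initial pairs from the initialization phase, but that term is absorbed, so the two routes coincide.

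Your per-round analysis, however, deviates from the paper and has a gap. You attribute the $(K+K_u)^2T^2$ factor to the pruning step ``delete $W_{k,t}$ from $P(V_i)$ if $\Psi(W_{k,t})\in F_{V_i}(W_{k,t})$'' by asserting that each list $F_{V_i}(W_{k,t})$ holds at most $O((K+K_u)T)$ blocked configurations. That bound is not justified: the incompatible list accumulates across \emph{all} outer iterations, and its size is governed by the number of distinct configurations $\Psi(W_{k,t})$ can assume over the entire run, i.e.\ by something like $\sum_{j\le Q}\binom{N+M-1}{j}$, not by $(K+K_u)T$. So on late rounds your scan cost could be much larger than stated (and if the membership test were implemented with a hash structure it would be much smaller, so the argument does not pin down the claimed exponent either way). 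The paper avoids this issue altogether: it gets the $O((K+K_u)^2T^2)$ term simply by treating the construction of each $P(V_i)$ as a naive quadratic sort of $(K+K_u)T$ items, then notes that the inner propose/reject loop contributes only $O((N+M)(K+K_u)T)$, which is dominated. That sorting cost is the intended source of the quadratic factor; you should replace your pruning-based justification with it.
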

\begin{proof}
See Appendix \ref{app:proof_complexity}.
\end{proof}

We then compare the DV-RMA with the greedy algorithm. In the \emph{greedy} algorithm, all vehicles access the time-frequency resource in turn until they are completely matched. Each vehicle occupies the unsaturated resource over which its SINR is the highest every time, in order to increase the possibility of being successfully detected.
Hence, there are $N+M$ vehicles and each one is matched with $S$ time-frequency resources in the greedy algorithm. Each time the {vehicle selects} one resource to access, it needs to search through all resources that haven't been matched with it and compare its SINR over these resources. Hence, the computing times for each vehicle are
\begin{equation}
[(K+K_u)T]+[(K+K_u)T-1]+\dots+[(K+K_u)T-(S-1)]=S\cdot [(K+K_u)T-\frac{S-1}{4}].
\end{equation}
The complexity of the greedy algorithm is
\begin{equation}
O\left((N+M)\cdot S\cdot [(K+K_u)T-\frac{S-1}{4}]\right).
\end{equation}

\subsection{Selection of the Sensitivity Factor $\lambda$} \label{sec:lambda}
In a subframe, denote the maximum and minimum interference radius of the cellular V2X users to VANET users by $L_{MAX}$ and $L_{min}$ respectively. $\lambda$ reflects the sensitivity of the interference of cellular V2X users to VANET users, and also controls the access of the cellular V2X users into the unlicensed spectrum. The value of $\lambda$ is related to the data transmission demand of VANET users. We then analyze how it tunes the performance.
\subsubsection{$\lambda > \frac{1}{\pi L_{min}^2}$}
In this case, neither the V2I user nor the V2V user can access the unlicensed time-frequency resources, since the penalty item in the objective function is large such that the benefit of the added active number of cellular V2X users cannot offset the disadvantage of the increased interference to VANET users.
Hence, cellular V2X users can only utilize the dedicated cellular spectrum.


\subsubsection{$\frac{1}{\pi L_{min}^2} < \lambda \leq \frac{1}{\pi L_{MAX}^2}$}
In this case, part of the V2I users and the V2V users can access the unlicensed time-frequency resources. The V2V user tends to share the same unlicensed time-frequency resource with other neighbouring V2V users, in order to minimize the total interference range to VANET users.
Hence, the unlicensed time-frequency resources are likely to be allocated to cluster-like cellular V2X users. Moreover, vehicles are encouraged to access more dedicated cellular time-frequency resources and fewer unlicensed time-frequency resources as the penalty factor increases.
\subsubsection{$\lambda\leq \frac{1}{\pi L_{MAX}^2} $}
Cellular V2X users are able to utilize both dedicated cellular and unlicensed time-frequency resources freely in this case. The interference to VANET users does not affect the allocation of the unlicensed time-frequency resources, since the interference to VANET users weighs little in the objective function.

\section{Simulation Results \label{sec:simulation_results}}%

In this section, we evaluate the performance of the proposed DV-RMA in the urban scenario. We compare the number of total active cellular V2X users in two modes: 1) \emph{shared mode} in which both dedicated cellular and unlicensed spectrum are used; 2) \emph{dedicated mode} in which only dedicated cellular spectrum is used. In addition, the proposed DV-RMA is compared with the greedy algorithm, as mentioned in Section \ref{sec:complexity}-2, in terms of the system performance.

\begin{table}
\small
\renewcommand{\arraystretch}{1.3}
\caption{Parameters Setting for Simulation.}
\label{table:parameters}
\centering
\begin{tabular}{l|l}
\hline
Parameters & Values \\
\hline
Transmit power of vehicles $P^v$ & $23dbm$\\
Threshold of the receiving power $P^r$ & $-75dbm$\\
Noise power spectrum density & $-174dbm/Hz$\\
Power gain factor $G$ & $-31.5db$ \\
Accessible threshold $\gamma_{th}$ & $0dB$ \\
Fading factor $\alpha$ & 3\\
Carrier center frequency & $2.4GHz$ \\
Subchannel bandwidth $B$ & $10kHz$ \\
Number of dedicated cellular subchannels $K$ & $10$ \\
Number of unlicensed subchannels $K_u$ & $10$ \\
Maximum number of time-frequency resources $Q$ & $3$ \\
Maximum number of vehicles $S$ & $3$ \\
Number of subframes in each SPS cycle $T$ & $10$ \\
\hline
\end{tabular}
\end{table}

\begin{figure}
\centering
\includegraphics[width=0.65\textwidth]{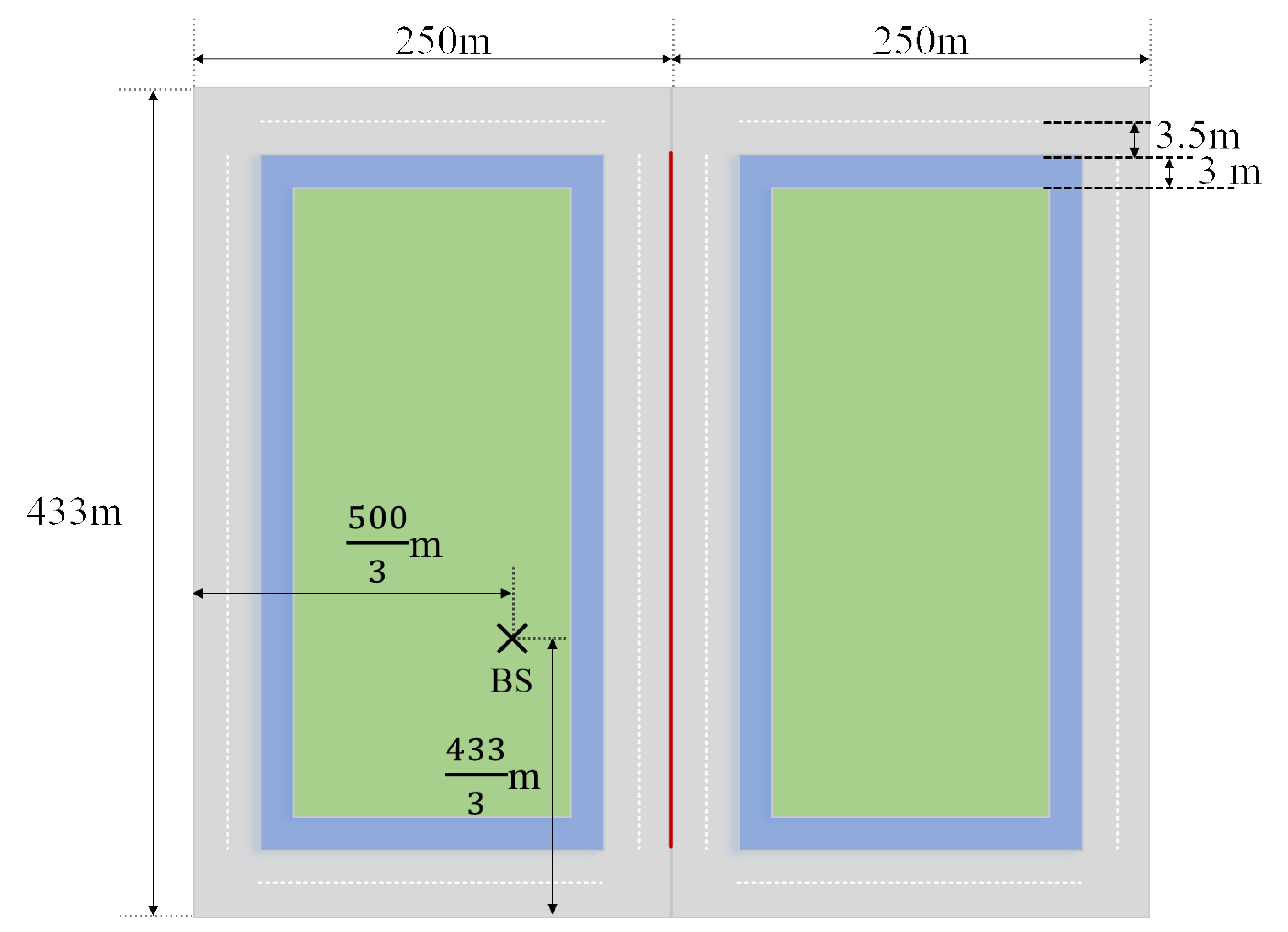}
\caption{Road setting in the urban scenario.}
\label{Fig:urban}
\end{figure}

We discuss the urban scenario defined in [\ref{ref:3GPP_V2X}], which is presented in the Fig. \ref{Fig:urban}. Both cellular V2X users and VANET users are randomly located on the road. The safe inter-vehicle distance in the same lane is $2.5s\times v$, where the speed of vehicles $v$ is $15 \sim 60km/h$ according to [\ref{ref:3GPP_V2X}]. Other parameters of the simulation are shown in the Table \ref{table:parameters} \cite{BJBF-2007}.

\begin{figure}
\centering
\includegraphics[width=0.65\textwidth]{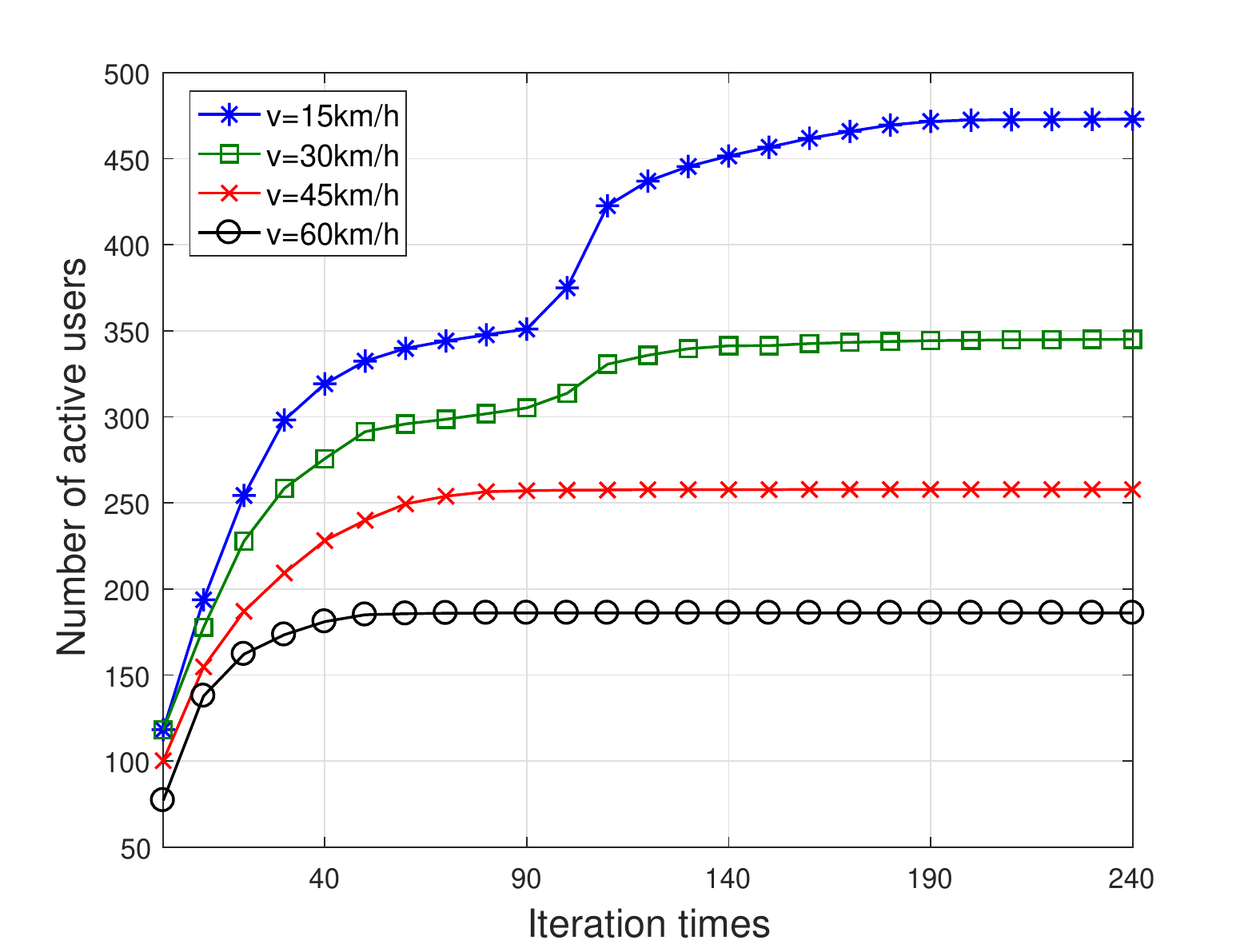}
\caption{Number of active cellular V2X users v.s. the iteration times with different velocities.}
\label{Fig:unlicensed_iteration}
\end{figure}

Fig. \ref{Fig:unlicensed_iteration} shows the number of active cellular V2X users in the SPS cycle vs. the iteration times with the speed of vehicles from $15km/h$ to $60km/h$. 
The number of active cellular V2X users decreases as the speed increases, because the number of vehicles decreases for the larger safe inter-vehicle {distance.}
We also observe that the number of active cellular V2X users converges to a stable value as that of the iterations increases, e.g., in DV-RMA the number of active cellular V2X users converges after 200 cycles, thereby reflecting the convergence of the DV-RMA.

\begin{figure}
\centering
\includegraphics[width=0.65\textwidth]{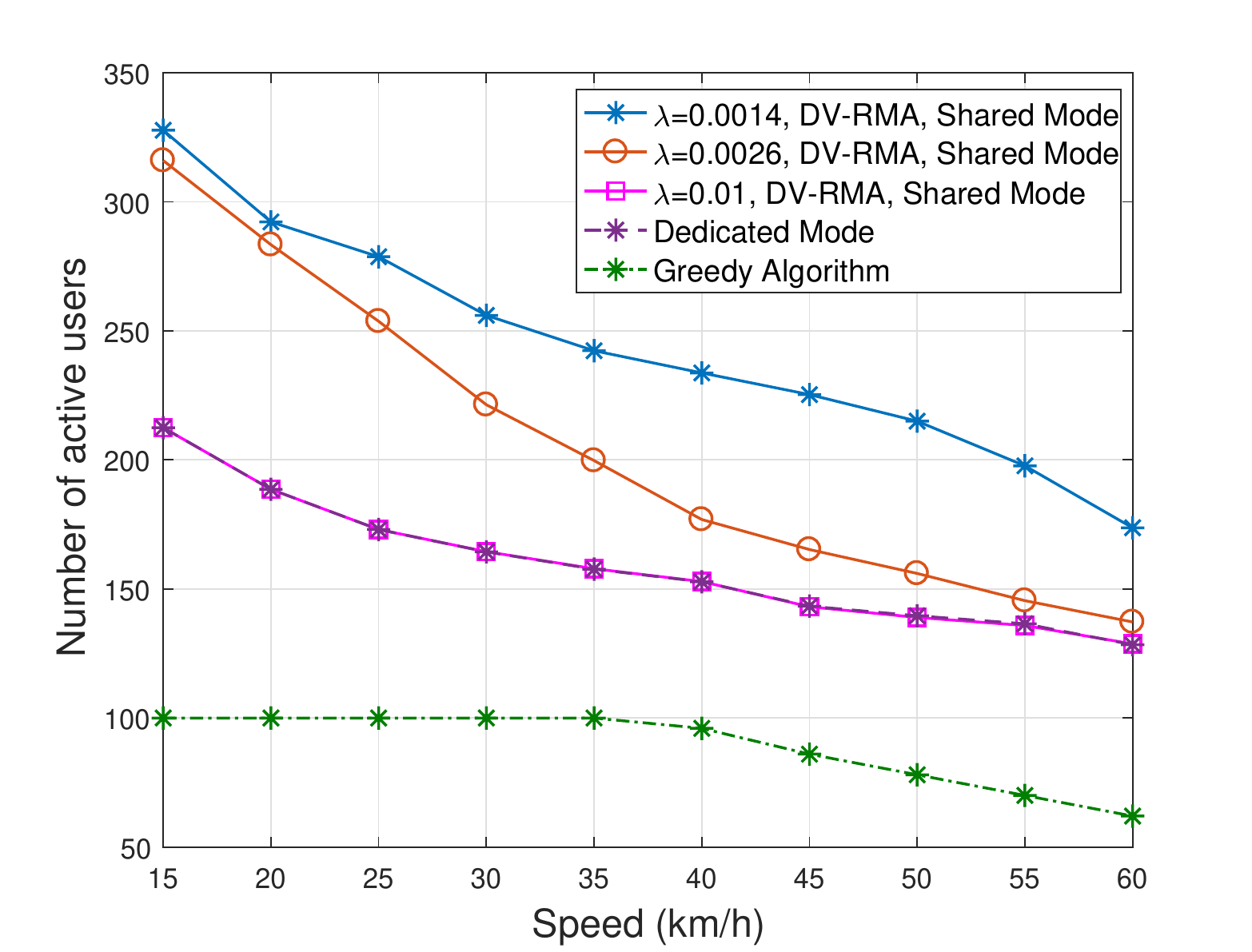}
\caption{Number of active cellular V2X users v.s. speed in different modes.}
\label{Fig:licensed}
\end{figure}

Fig. \ref{Fig:licensed} presents the number of active cellular V2X users v.s. the speed in the shared mode and dedicated mode. We can learn that all the curves decrease with the increasing speed. In the shared mode, the number of active cellular V2X users is no fewer than that in the dedicated mode. The gap between curves in the two modes increases as the speed decreases, because the limited dedicated cellular time-frequency resources cannot satisfy the transmission demand of vehicles, while the unlicensed resources can serve as the complement to support more vehicles.
Besides, as the penalty factor $\lambda$ increases, the number of active cellular V2X users decreases. Since the interference to VANET users caused by cellular V2X users becomes severer as the penalty factor $\lambda$ increases, the BS is discouraged to allocate the unlicensed spectrum to the cellular V2X users.
It is also observed that the number of active cellular V2X users in the DV-RMA is at least twice larger than that in the greedy algorithm, e.g. when $\lambda=0.0026$ and the speed is $15km/h$ in the DV-RMA, the number of active cellular V2X users is more than 300, while that is only 100 in the greedy algorithm. The DV-RMA performs better than the greedy algorithm because vehicles dynamically change their preference lists and resources can be allocated to the vehicles which bring them better performance.

\begin{figure}
\centering
\includegraphics[width=0.65\textwidth]{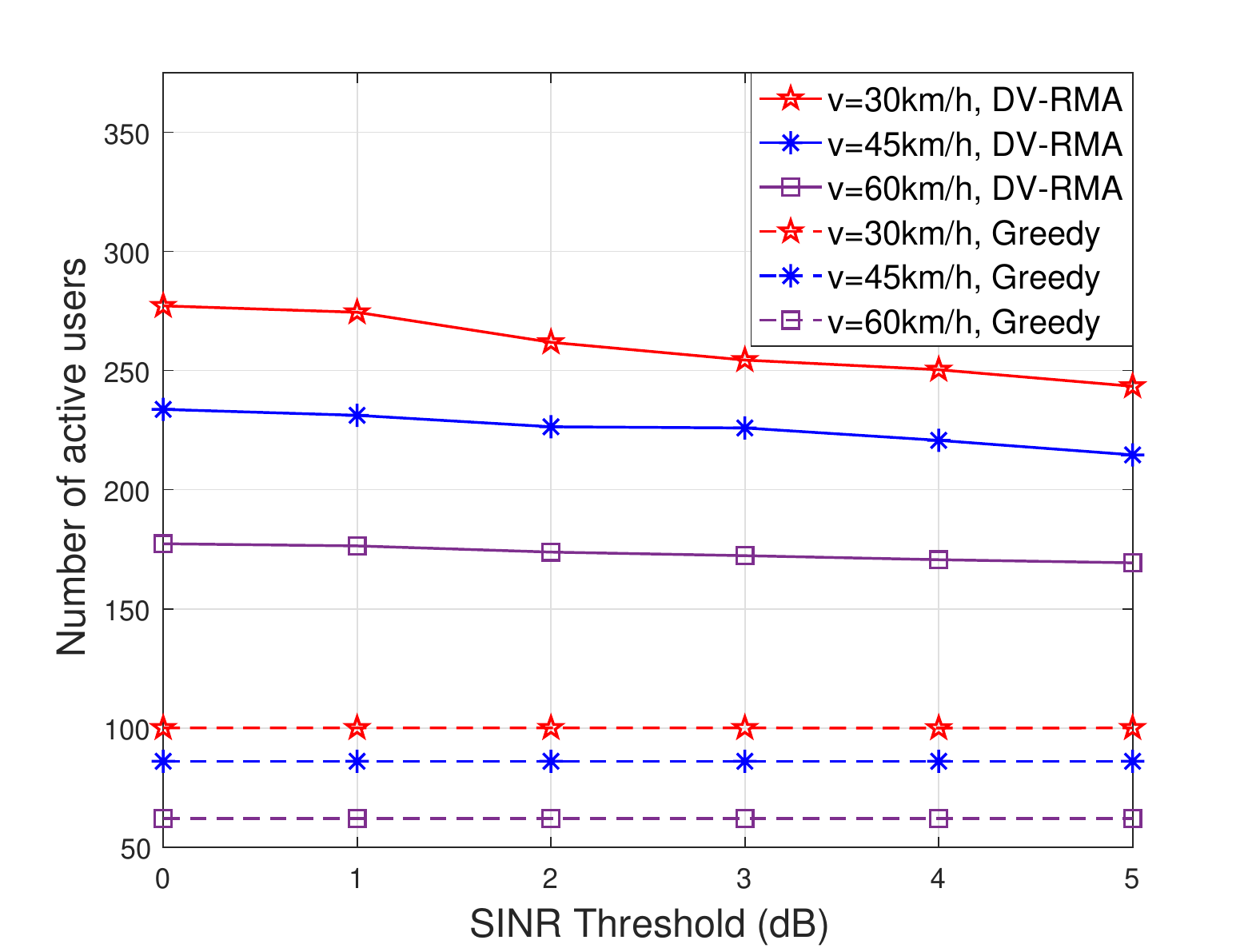}
\caption{Number of active cellular V2X users vs. active SINR threshold $\gamma_{th}$ with different speed.}
\label{Fig:threshold}
\end{figure}

Fig. \ref{Fig:threshold} illustrates the number of active cellular V2X users v.s. the SINR threshold $\gamma_{th}$ with different speed.
It is easily observed that the DV-RMA obtains better performance than the greedy algorithm, even when the speed in the DV-RMA is higher than that in the greedy algorithm.
In the DV-RMA, the number of active cellular V2X users decreases as the SINR threshold increases. This implies that the higher the threshold is, the harder it is for cellular V2X users to successfully detect over the unlicensed resources.
In the greedy algorithm, the allocation of time-frequency resources is only related to the rank of the SINR of the cellular V2X user. Hence, the number of active cellular V2X users remains the same as the SINR threshold increases.
Moreover, in the DV-RMA, the number of active cellular V2X users grows as the speed decreases with the same SINR threshold, since the number of vehicles increases with the smaller inter-vehicle distance. 

\begin{figure}
\centering
\includegraphics[width=0.65\textwidth]{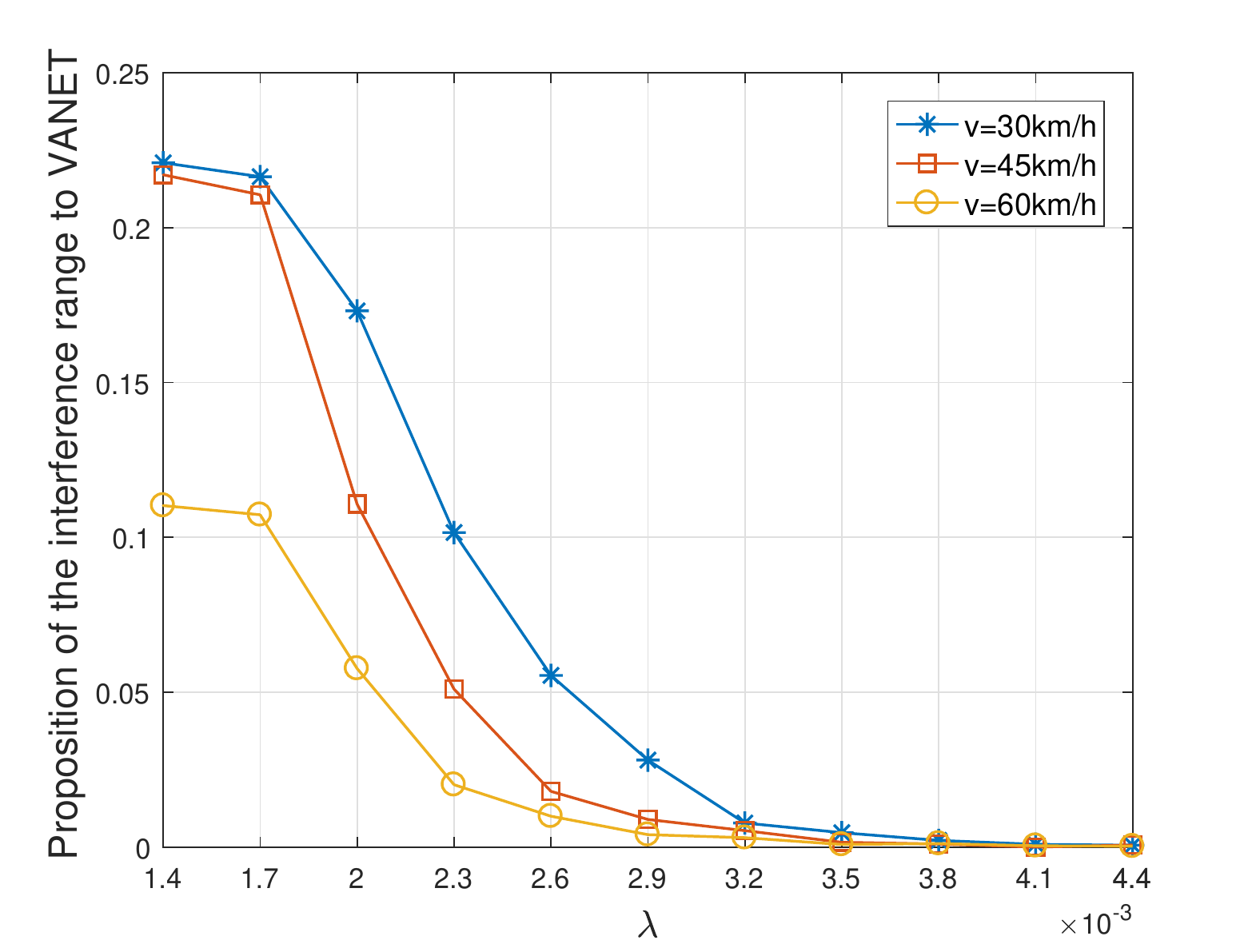}
\caption{Proportion of total interference range to VANET and the sum area vs. penalty factor $\lambda$ with different speed.}
\label{Fig:interference}
\end{figure}

Fig. \ref{Fig:interference} presents the interference range to VANET users v.s. the penalty factor $\lambda$ with different speed.
We evaluate the interference range by the ratio of the total interference range to the sum area.
We observe that as the penalty factor $\lambda$ increases, the interference range to VANET users decreases. This is because unlicensed resources are discouraged to be allocated to cellular V2X users with large penalty factor. Moreover, the interference range is larger with lower speed, which implies that more transmission demands need to be offloaded to unlicensed spectrum and the interference to VANET users is severer with lower speed.

\begin{figure}
\centering
\includegraphics[width=0.65\textwidth]{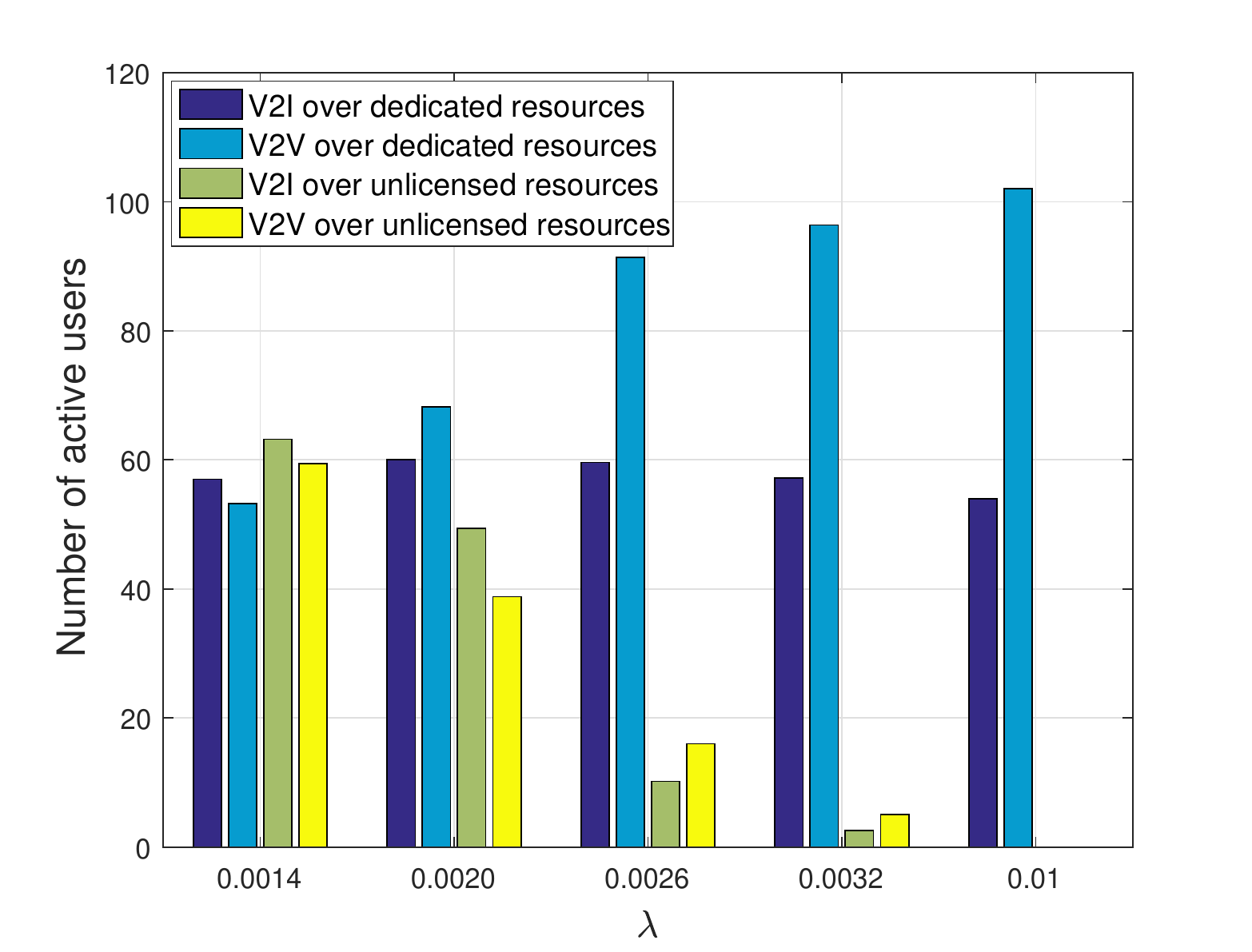}
\caption{Number of active V2I and V2V over dedicated and unlicensed resources vs. penalty factor $\lambda$ when $v=45km/h$.}
\label{Fig:res_lambda}
\end{figure}

Fig. \ref{Fig:res_lambda} shows the number of active V2I and V2V users over dedicated and unlicensed resources v.s. the penalty factor $\lambda$ when the speed is $45km/h$. It can be observed that the unlicensed subchannel for cellular V2X users decrease as the value of $\lambda$ increases, in order to reduce the interference to VANET users. Cellular V2X users can utilize both dedicated cellular and unlicensed resources freely when $\lambda < \frac{1}{\pi L_{MAX}^2}$, 
and are discouraged to access the unlicensed resources as $\lambda$ increases. When $\lambda$ is larger than a threshold, e.g., $\lambda > 0.01$, cellular V2X users are not allowed to utilize the unlicensed resources. This is consistent with the results analyzed in Section \ref{sec:lambda}.


\section{Conclusion \label{sec:conclusion}}%

In this paper, we studied the spectrum sharing problem where cellular V2X users coexist with VANET users in the unlicensed spectrum. We proposed an energy sensing based spectrum sharing scheme for cellular V2X users to share the unlicensed spectrum fairly with VANET users. The allocation problem of time-frequency resources to cellular V2X users in the semi-persistent scheduling cycle was formulated as a two-sided many-to-many matching problem with peer effects. We then developed a dynamic vehicle-resource matching algorithm to solve the problem, and analyzed its stability, computational complexity, as well as convergence.
Simulation results showed that enabling the sharing of the unlicensed spectrum can increase the system performance, and the proposed DV-RMA obtained better performance than the greedy algorithm.
The communication demands of different types of users can be satisfied by adjusting the penalty factor. More cellular V2X users can be  supported over the unlicensed spectrum as the penalty factor decreases.


\vspace{-3mm}
\begin{appendices}
\vspace{-2mm}

\section{Proof of Proposition 1}\label{app:proof_prop}

The overlapping area of cellular V2X users $i$ and $j$ is related to the distance $d_{i,j}$ between them. Here we assume that $L_i<L_j$, and the analysis is similar when $L_i<L_j$.
{
\par When $d_{i,j}>\sqrt{L_j^2-L_i^2}$, as shown in Fig.~\ref{Fig:overlap}-(a), the overlapping area can be expressed as follows, as described in \cite{circle_intersection}.
\begin{equation}
\label{e:cal_overlap_a}
O_{i,j}=L_i^2 \arccos(\frac{x_i}{L_i})-x_i \sqrt{L_i^2-x_i^2} + L_j^2 \arccos(\frac{x_j}{L_j})-x_j \sqrt{L_j^2-x_j^2},
\end{equation}
where
\begin{equation}
\label{e:cal_overlap_a_x}
x_i=\frac{ d_{i,j}^2+L_i^2-L_j^2}{2d_{i,j}}, \ x_j=\frac{ d_{i,j}^2-L_i^2+L_j^2}{2d_{i,j}}.
\end{equation}
}
\par When $d_{i,j}\leq\sqrt{L_j^2-L_i^2}$, as shown in Fig.~\ref{Fig:overlap}-(b), {which is similar to the case in \cite{circle_intersection},} we have
\begin{equation*}
\left\{
\begin{array}{ll}
x_i+d_{i,j} = x_j ;\\
L_i^2-x_i^2 = L_j^2-x_j^2.
\end{array}
\right.
\end{equation*}
Hence,
\begin{equation}
\label{e:cal_overlap_b_x}
x_i=\frac{ L_j^2 - d_{i,j}^2-L_i^2}{2d_{i,j}}, \ x_j=\frac{ d_{i,j}^2-L_i^2+L_j^2}{2d_{i,j}}.
\end{equation}
The overlapping area can be expressed as
\begin{equation}
\label{e:cal_overlap_b}
O_{i,j}=\pi L_i^2 - \left(L_i^2 \arccos(\frac{x_i}{L_i})-x_i \sqrt{L_i^2-x_i^2}\right) + L_j^2 \arccos(\frac{x_j}{L_j})-x_j \sqrt{L_j^2-x_j^2}.
\end{equation}

We replace $x_i$ and $x_j$ in (\ref{e:cal_overlap_a}) and (\ref{e:cal_overlap_b}) with (\ref{e:cal_overlap_a_x}) and (\ref{e:cal_overlap_b_x}), then we have (\ref{e:overlap}). $\hfill\blacksquare$

\begin{figure}
\centering
\includegraphics[width=0.75\textwidth]{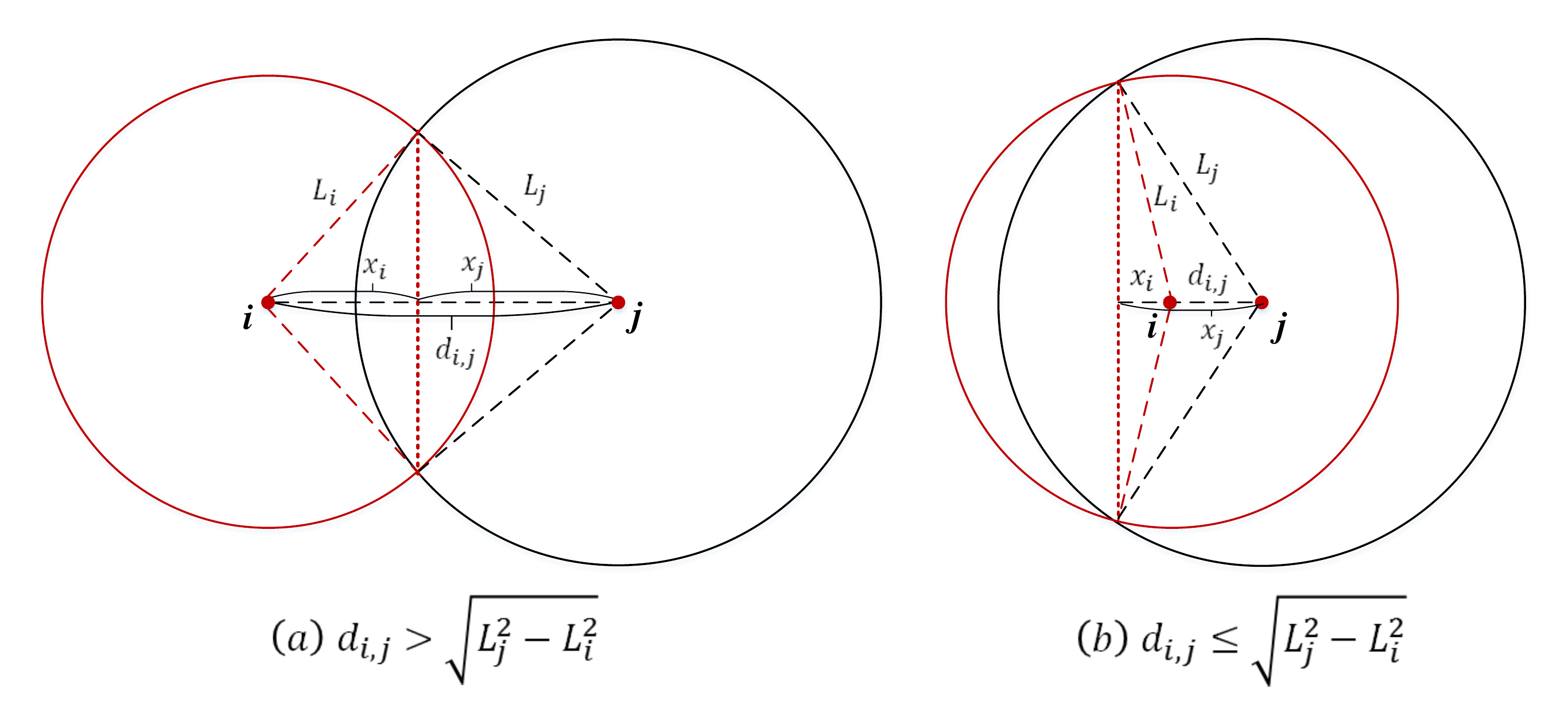}
\caption{Two cases of calculating the overlapping area of the interference range.}
\label{Fig:overlap}
\end{figure}

\section{Proof of Theorem \ref{thm:pairwise stable}}\label{app:proof_pairwise_stable}
 If $\Psi$ is not a \emph{pairwise stable} matching, there exists a pair $(V_i,W_{k,t})$ such that $A\succ_{W_{k,t}}\Psi(W_{k,t}), A \subseteq \{V_i\}\cup\Psi(W_{k,t}), V_i\in A,$ and $W_{k,t}\succ_{V_i}W_{k',t'}, W_{k',t'}\in\Psi(V_i)$.
Apparently, $V_i$ will propose itself to $W_{k,t}$ in the next round of \emph{matching process} to pursue its higher interest. However, since the algorithm has converged to $\Psi$, there are no more proposals of vehicles. The only reason why $V_i$ does not propose itself to $W_{k,t}$ is that $\Psi(W_{k,t})$ is the incompatible list to $V_i$. This means that $W_{k,t}$ rejected $V_i$ before under the same matching structure, denoted by $\Psi(W_{k,t})\succ_{W_{k,t}}A', \ A' \subseteq \{V_i\}\cup\Psi(W_{k,t}),$ which is contradictory to the pairwise stable assumption. Hence, \textbf{theorem \ref{thm:pairwise stable}} is proved. $\hfill\blacksquare$

\section{Proof of Theorem \ref{thm:convergence}}\label{app:proof_convergence}
With the definition of incompatible list in the DV-RMA algorithm, there are only two conditions that vehicle $V_i$ proposes to time-frequency resource $W_{k,t}$. One is that $W_{k,t}$ accept the proposal without rejecting any other vehicle. The other is that $W_{k,t}$ rejects one of $V_i$ and formerly accepted vehicles, and then adds the updated $\Psi(W_{k,t})$ into the incompatible list of the rejected vehicle. Since one vehicle is matched with at most $S$ time-frequency resources, the times of accepting without rejecting any other vehicle is limited in $(N+M)S$. Moreover, each vehicle has no more than $(C_{N+M-1}^{1}+C_{N+M-1}^{2}+\dots+C_{N+M-1}^{S})$ incompatible lists over each time-frequency resources, so the total incompatible lists is no more than
\begin{equation}
\label{e:incompatible_pair}
T(K+K_u)(C_{N+M-1}^{1}+C_{N+M-1}^{2}+\dots+C_{N+M-1}^{S}).
\end{equation}
Hence, the total number of matched pairs and incompatible lists is limited and the DV-RMA converges to a \emph{pairwise stable} matching after a limited number of iterations. $\hfill\blacksquare$

\section{Proof of Theorem \ref{thm:complexity}}\label{app:proof_complexity}
Each dynamic round of DV-RMA includes preference list updating and \emph{matching process}.
Updating a preference list is actually a sorting process of preferences with the complexity of $O\left((K+K_u)^2T^2\right)$. At most $N+M$ vehicles need to update their preference lists. Hence, the complexity of preference list updating is $O\left((N+M)(K+K_u)^2T^2\right)$.
In the \emph{matching process}, at most $N+M$ vehicles propose to time-frequency resources in each round, and one vehicle proposes itself to no more than $(K+K_u)T$ time-frequency resources during the \emph{matching process}. Hence, the complexity of \emph{matching process} is $O\left((N+M)(K+K_u)T\right)$.
Therefore, the complexity of each dynamic round of the DV-RMA is
\begin{equation}
O((N+M)(K+K_u)^2T^2)+O((N+M)(K+K_u)T)= O((N+M)(K+K_u)^2T^2).
\end{equation}

\par The DV-RMA includes the \emph{initialization phase} and \emph{matching phase}. The total number of matching pairs initialized in the \emph{initialization phase} is no more than $(N+M)S$. In the \emph{matching phase}, the maximum proposal times is described as (\ref{e:incompatible_pair}) in \textbf{theorem \ref{thm:convergence}}. Hence, the upper bound of the proposal number in DV-RMA is
\begin{small}
\begin{equation*}
O\left((N+M)S\right)+O\left(T(K+K_u)(C_{N+M-1}^{1}+C_{N+M-1}^{2}+\dots+C_{N+M-1}^{S})\right)=O\left(T(K+K_u)(N+M)^{S+1}\right).
\end{equation*}
\end{small} $\hfill\blacksquare$

\end{appendices}
\vspace{-5mm}

\end{document}